\let\expandafter\oldproof\csname\string\proof\endcsname
\let\oldendproof\endproof
\renewenvironment{proof}[1][\proofname]{%
	\oldproof[\bf #1]%
}{\oldendproof}
\theoremstyle{plain}
\newtheorem{theorem}{Theorem}
\newtheorem{lemma}{Lemma}[section]
\newtheorem{claim}[lemma]{Claim}
\newtheorem{fact}[lemma]{Fact}
\newtheorem{remark}[lemma]{Remark}
\newtheorem{definition}[lemma]{Definition}
\newcommand{\hyp}{\mathcal H}
\newcommand{\vex}{V(\hyp)}
\newcommand{\edg}{E(\hyp)}
\newcommand{\T}{\mathcal T}
\newcommand{\X}{\mathcal X}
\newcommand{\poly}{\text{poly}}
\definecolor{RED}{rgb}{1,0,0}\definecolor{BLUE}{rgb}{0,0,1} %DIF PREAMBLE
\begin{document}

\title{A Fast Coloring Oracle for Average Case Hypergraphs}

\author{Cassandra Marcussen\thanks{School of Engineering and Applied Sciences, Harvard University, Cambridge, Massachusetts, USA. Email: cmarcussen@g.harvard.edu. Supported in part by an NDSEG fellowship, and by NSF Award 2152413 and a Simons Investigator Award to Madhu Sudan.} \and Edward Pyne\thanks{Department of Electrical Engineering and Computer Science, MIT, Cambridge, Massachusetts, USA. Email: epyne@mit.edu. Supported by an NSF Graduate Research Fellowship.} \and Ronitt Rubinfeld\thanks{Computer Science and Artificial Intelligence Laboratory, MIT, Cambridge, Massachusetts, USA. Email: ronitt@csail.mit.edu. Supported by the NSF TRIPODS program (award DMS-2022448)
and CCF-2310818.}  \and Asaf Shapira\thanks{School of Mathematics, Tel Aviv University, Tel Aviv 69978, Israel. Email: asafico@tau.ac.il. Supported in part
by ERC Consolidator Grant 863438.} \and Shlomo Tauber \thanks{School of Computer Science, Tel Aviv University, Tel Aviv 69978, Israel. Email: shlomotauber@mail.tau.ac.il. Supported in part
by ERC Consolidator Grant 863438.}}

\date{}
\maketitle

\begin{abstract}
Hypergraph $2$-colorability is one of the classical NP-hard problems. Person and Schacht [SODA'09] designed a deterministic algorithm whose expected running time is polynomial over a uniformly chosen $2$-colorable $3$-uniform hypergraph. Lee, Molla, and Nagle recently extended this to $k$-uniform hypergraphs for all $k\geq 3$. Both papers relied heavily on the regularity lemma, hence their analysis was involved and their running time hid tower-type constants.

Our first result in this paper is a new simple and elementary deterministic $2$-coloring algorithm that reproves the theorems of Person--Schacht and Lee--Molla--Nagle while avoiding the use of the regularity lemma. We also show how to turn our new algorithm into a randomized one with average expected running time of only $O(n)$.

Our second and main result gives what we consider to be the ultimate evidence of just how easy it is to find a $2$-coloring of an average $2$-colorable hypergraph. We define a {\em coloring oracle} to be an algorithm which, given vertex $v$, assigns color red/blue to $v$ while inspecting as few edges as possible, so that the answers to any sequence of queries to the oracle are consistent with a single legal $2$-coloring of the input. Surprisingly, we show
that there is a coloring oracle that, on average, can answer {\bf every} vertex query in time $O(1)$. 
\end{abstract}

\thispagestyle{empty}

\newpage

\setcounter{page}{1}

\section{Introduction}
Graph partition problems are among the most well-studied topics in algorithmic graph theory.
These problems ask if a graph can be partitioned so that a certain \textit{global} property holds.
Among these properties, probably the most well-studied one is graph and hypergraph colorability.
Let us quickly recall the relevant definitions.
A $k$-uniform hypergraph $\hyp=(\vex, \edg)$ consists of a vertex set $\vex$ and edge set $\edg$ where every edge $e\in\edg$ is a subset of $\vex$ of size $k$. For brevity, we call a $k$-uniform hypergraph a $k$-graph, noting that a $2$-graph coincides with the usual notion of a graph.
We denote $\big|\vex\big| = n$. A $k$-graph $\hyp$ is $d$-colorable if $\vex$ has a $d$-coloring so that in each edge at least two vertices
are colored differently. When a $k$-graph is $2$-colorable, we will also call it {\em bipartite\footnote{In extremal combinatorics, $2$-colorability is also called {\em Property B}.}}. While graph $d$-colorability is NP-hard for $d \geq 3$, Lov\'asz \cite{lovasz_coloring_hypergraph} famously proved that for $k\geq 3$ deciding if a $k$-graph is $2$-colorable is also NP-hard.
Hypergraph 2-colorability has been extensively studied in combinatorics \cite{Czumaj2000ColoringNH, LovszProblemsAR, 743519}, with notable contributions leading to the development of the renowned Lovász Local Lemma \cite{LovszProblemsAR}. In computer science, hypergraph coloring has received significant attention due to its strong connections with fundamental problems in graph coloring and satisfiability of Boolean formulas \cite{10.1145/335305.335310, Lu1998DeterministicHC}. By leveraging approximation techniques from graph coloring, several works \cite{10.5555/763878.763887, 10.5555/645587.659591, KRIVELEVICH200199, KRIVELEVICH20032} have proposed algorithms for properly coloring 2-colorable hypergraphs, where the number of colors depends on the size of the hypergraph.

Given the hardness of deciding graph and hypergraph colorability problems, it is natural to ask if these problems are easy
on average. This question is a natural one in the study of average-case complexity of NP-hard problems. The first result in this direction was obtained by Turner \cite{Turner} who proved that there is an algorithm that can
find a $d$-coloring of almost all $d$-colorable graphs. Note that this does {\em not} give an algorithm whose average running time is polynomial over the set of $d$-colorable graphs.
Such a result was obtained in a highly influential paper of Dyer and Frieze \cite{reduction}. It is of course natural to design such algorithms for $k$-graphs. The first to so were Person and Schacht \cite{PersonSchacht1,PersonPolynomialAlog} who gave an average case polynomial time
algorithm for $3$-graph $2$-colorability. Their result was recently extended to arbitrary $k \geq 3$ by Lee, Molla, and Nagle \cite{lee2024twocoloringbipartiteuniformhypergraphs} who gave an average case $O(n^k)$ algorithm.
The algorithms of \cite{lee2024twocoloringbipartiteuniformhypergraphs,PersonSchacht1,PersonPolynomialAlog} rely on graph/hypergraph regularity lemmas. As a result, they were difficult to analyze, used algorithmic versions of the regularity lemma~\cite{ADLRY} as a black box, and their running times hid tower-type constants.

\subsection{New Classical Algorithms}

Our main results in this paper improve upon \cite{lee2024twocoloringbipartiteuniformhypergraphs,PersonSchacht1,PersonPolynomialAlog} in several ways. In what follows, a $2$-coloring algorithm is one that is guaranteed to find a $2$-coloring of every $2$-colorable $k$-graph.
Throughout the rest of the paper we assume that $k \geq 3$ is an absolute constant. Therefore, while in many places the dependence on $k$ can be improved significantly, we chose not to do so for the sake of simplifying the proofs.

Our first result in this paper is a new deterministic $2$-coloring algorithm with polynomial average case running time. The algorithm is simple and completely elementary. In particular, it does not use any graph or hypergraph regularity lemma.

\begin{theorem}\label{algorithm theorem}
There is a deterministic $2$-coloring algorithm whose average case running time over the set of $2$-colorable $k$-graphs is $n^{O(k)}$.
\end{theorem}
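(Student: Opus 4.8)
The plan is to design an algorithm that on most inputs runs in polynomial time, and on the rare "bad" inputs falls back to brute force, so that the average is still polynomial. The key point is that the number of $2$-colorable $k$-graphs on $n$ vertices is roughly $2^{\Theta(n^k)}$ (since a fixed balanced partition certifies $2$-colorability for an exponential-in-$n^k$ family of edge sets, while the total number of $k$-graphs is $2^{\binom{n}{k}}$), so a fallback procedure taking time $2^{O(n^k)}$—for instance, trying all $2^n$ colorings, or even all $2^{\binom nk}$ subgraphs—only needs to be invoked on a $2^{-\omega(n^k)}$ fraction of inputs to contribute $o(1)$ to the average.

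First I would isolate the structural feature that makes a typical $2$-colorable $k$-graph easy: for a uniformly random $2$-colorable $k$-graph, conditioned on a fixed legal partition $(A,B)$, each "monochromatic-free" potential edge is present independently with probability $1/2$, so with overwhelming probability the hypergraph is dense and every small vertex set has the "expected" number of edges extending it in every pattern. Concretely, I expect the algorithm to proceed in two phases. In Phase 1 it builds a small "seed" structure—e.g., it looks for a constant-size or $O(\log n)$-size set of vertices together with edges forcing, via a propagation/unit-clause style argument, the colors of almost all other vertices; the density of a typical instance guarantees that such a seed exists and can be found by local search in polynomial time. In Phase 2, once the colors of all but $o(n)$ (or even all but $O(\log n)$) vertices are pinned down, the remaining vertices can be colored by checking all $2^{o(n)}$ (or $\poly(n)$) extensions against the $O(n^k)$ edges. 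If either phase fails—the seed is not found, or no consistent extension exists—the algorithm declares the instance "irregular" and runs the exponential-time fallback.

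The heart of the analysis is then a counting/probabilistic lemma bounding the number of $2$-colorable $k$-graphs on which the algorithm fails to terminate quickly. The strategy is: for each fixed partition $(A,B)$, the set of $2$-colorable graphs legal for $(A,B)$ on which the fast phases fail has density at most $2^{-\omega(n^k)}$ among the graphs legal for $(A,B)$ (this is where the independence/concentration of random edges is used — a union bound over the $\poly(n)$ or $2^{o(n)}$ "failure events" against a probability like $2^{-\Omega(n^{k-1})}$ each suffices); then union-bounding over the at most $2^n$ choices of $(A,B)$ keeps the fraction of bad instances at $2^{-\omega(n^k)}$. Multiplying this fraction by the $2^{O(n^k)}$ fallback cost yields $o(1)$, and the fast path costs $n^{O(k)}$ always, giving the claimed bound.

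I expect the main obstacle to be Phase 1: arguing that a typical dense $2$-colorable $k$-graph admits a polynomial-time-findable seed that propagates to essentially all vertices. The naive worry is that the random edges, while dense, might not individually force any single vertex's color. The resolution should be that in a $k$-graph with $\Theta(n^k)$ edges, there are $\Omega(n^{k-2})$ edges through any fixed pair of vertices, and once $k-1$ vertices of an edge are monochromatic the $k$-th is forced; iterating, a constant-size monochromatic "core" of the right color (which exists whp and can be found by exhaustive search over constant-size sets, costing $n^{O(k)}$) forces the color of every vertex lying in a common edge with $k-1$ of its members, and whp this is all but a vanishing fraction of vertices. Making "whp" quantitatively strong enough — a failure probability of $2^{-\omega(n^{k-1})}$, not merely $o(1)$ — so that it survives the $2^n$-fold union bound over partitions is the delicate part, and it is exactly what the edge-independence of the conditioned random model buys us.
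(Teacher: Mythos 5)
Your high-level plan — define a structural property that typical $2$-colorable $k$-graphs satisfy, solve quickly on property-satisfying inputs via a small seed plus color propagation, fall back to exhaustive search otherwise, and keep the average polynomial by bounding the fraction of bad inputs — is exactly the paper's. The point where your plan is underspecified and, as written, fails is your Phase 2. You color all but $o(n)$ vertices by propagation and then try all $2^{o(n)}$ completions; this is not $n^{O(k)}$ unless ``$o(n)$'' is actually $O(\log n)$, a claim you hedge on but neither prove nor need. The paper's \emph{good} property (Definition \ref{definition:good-hypergraph}, part (ii)) is crafted precisely so that two rounds of propagation from the seed determine \emph{every} vertex's color (this is what Lemmas \ref{lemma: good-hypergraph property ii} and \ref{lemma: good-hypergraph property iii} deliver), so there is no residual set to brute-force at all; your analysis should aim for that stronger claim, which the conditioned random model does support.

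A second, smaller gap is the seed itself. A constant-size ``monochromatic core'' is not something you can recognize by inspection (an independent set need not be monochromatic), nor does it by itself determine which color it carries, so your exhaustive search over constant-size sets has no stopping criterion without an extra global verification loop. The paper instead looks for a copy of the specific $k$-graph $K_{\ell,\ell}$ (two $\ell$-sets $A,B$ with all $1$-to-$(k-1)$ cross-edges); Claim \ref{uniquecoloirng} shows this has a unique $2$-coloring once $\ell\geq 2k-3$, so any copy found by brute force immediately yields two certified, opposite-colored blocks to propagate from. A final quantitative remark: the bad-input fraction you target, $2^{-\omega(n^k)}$, is both more than you need (the fallback is $2^n\cdot\poly(n)$, so $2^{-2n}$ suffices) and more than the argument yields — parts of the good property, e.g.\ the neighborhood condition in part (ii), fail with probability only $2^{-\Theta(n)}$, and Lemma \ref{good hyper graph probability over all bipartite hypergraphs} proves the overall bad fraction is $\leq 2^{-2n}$, which is the right scale.
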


To be precise, and to set the stage for the introduction of the coloring oracle of Theorem \ref{main theorem} below, we state the above theorem as follows.
If $A$ is a deterministic $2$-coloring algorithm, then $T_A(\hyp)$ denotes the running time of $A$ on input $\hyp$, and $T_A(n)$ denotes the average of $T_A(\hyp)$ over all $2$-colorable $k$-graphs on $n$ vertices. Theorem \ref{algorithm theorem} thus states that there is a $2$-coloring algorithm $A$ satisfying $T_A(n)=n^{O(k)}$.

Our second result shows that if we allow the $2$-coloring algorithm to use randomization, then we can significantly improve the average case expected running time.

\begin{theorem}\label{randomized algorithm}
There is a randomized $2$-coloring algorithm whose average case expected running time over the set of $2$-colorable $k$-graphs is $O(n)$.
\end{theorem}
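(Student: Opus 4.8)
The plan is to combine the deterministic algorithm $A$ of Theorem~\ref{algorithm theorem} --- which always outputs a legal $2$-coloring, in time $T_A(\hyp)$ with average $T_A(n)=n^{O(k)}$ and, like any elementary $2$-coloring procedure, worst case $T_A^{\mathrm{wc}}\le 2^{O(n)}$ --- with a fast randomized heuristic $H$ that, on all but a $2^{-\Omega(n^2)}$ fraction of $2$-colorable $k$-graphs, outputs a legal $2$-coloring \emph{together with a certificate of legality checkable in $O(n)$ time}, and on the rest reports failure. The final algorithm runs $H$, outputs its coloring if the certificate goes through, and otherwise runs $A$; it is thus always correct, with expected running time over a uniformly random $2$-colorable $k$-graph at most $\mathbb{E}[\text{time of }H]+\Pr_\hyp[H\text{ fails}]\cdot T_A^{\mathrm{wc}}$. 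So it suffices to make $H$ run in expected time $O(n)$ and fail only on a set of $2$-colorable $k$-graphs of measure $2^{-\Omega(n^2)}$, since $2^{-\Omega(n^2)}\cdot 2^{O(n)}=o(1)$. Note that $H$ cannot merely \emph{verify} a guessed coloring: a legal $2$-coloring has up to $\Theta(n^k)$ potentially monochromatic $k$-sets, so the certificate must be genuinely local.

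The engine is an elementary structural description of a uniformly random $2$-colorable $k$-graph $\hyp$ (no regularity lemma): except with probability $2^{-\Omega(n^2)}$, $\hyp$ is obtained by fixing a \emph{balanced} bipartition $\vex=R\cup B$ and then including each non-monochromatic $k$-set independently with probability $\tfrac12$; in this event $(R,B)$ is the unique legal $2$-coloring up to swapping colors, and $\hyp$ is quasirandom in the senses we need. Legality can then be certified by \emph{forced propagation with chaining}: for a set $S_0$ for which $\hyp[S_0]$ has a unique legal $2$-coloring $\chi_0$ up to swap, $\chi_0$ is (since $\hyp$ is $2$-colorable) the restriction of some legal $2$-coloring $\psi$ of $\hyp$; growing the colored set by the rule ``colour an uncoloured $v$ with the colour opposite to $c$ once some edge $\{v\}\cup T\in\edg$ has all of $T$ already coloured $c$'' only ever assigns colours \emph{forced} on $\psi$, so if this process colours all of $\vex$ the coloring it produces is $\psi$ and is legal. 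The certificate is then ``a brute-force check that $\hyp[S_0]$ is uniquely $2$-colorable up to swap, together with, for each $v\notin S_0$, one witnessing edge $\{v\}\cup T\in\edg$ with $T$ monochromatic among the vertices coloured before $v$'', which one verifies with $\poly(|S_0|)+O(n)$ queries.

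So $H$ repeats: sample a uniform seed $S_0\subseteq\vex$ with $|S_0|=s:=\Theta_k((\log n)^{1/(k-1)})$; brute-force all $2^s=n^{o(1)}$ colorings of $\hyp[S_0]$ and, if $\hyp[S_0]$ is not uniquely $2$-colorable up to swap or that coloring is unbalanced on $S_0$, start a new round; otherwise run the chained forced propagation from $\chi_0$, each step finding a witnessing edge for some uncoloured vertex $v$ by sampling random $k$-sets $\{v\}\cup T$ with $T$ inside one colour class of the current colored set and querying until an edge appears; if all of $\vex$ gets coloured, output it with the certificate; if the colored set stops growing while a vertex is still uncoloured, fall back to $A$; and after $n^2$ unsuccessful rounds, also fall back to $A$. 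For the analysis, call $\hyp$ \emph{nice} if $(R,B)$ is balanced up to $\pm n/100$, $\hyp$ has no ``bad set'' (a nonempty $W$ no vertex of which is forced by the coloring of $\vex\setminus W$ in any legal $2$-coloring), a $1-o(1)$ fraction of balanced $s$-sets $S_0$ have $\hyp[S_0]$ uniquely $2$-colorable up to swap, and the relevant quasirandom density bounds hold; each clause fails with probability $2^{-\Omega(n^2)}$ (the ``no bad set'' clause by a union bound giving $2^{-\Omega(n^{k-1})}$, the others by standard concentration over the $\binom{n}{k}$ edge indicators, whose bounded differences are $O((s/n)^k)$), so $\hyp$ is nice with probability $1-2^{-\Omega(n^2)}$. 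On a nice $\hyp$ each round draws a good seed with probability $\Omega(1)$ (an $s$-set is balanced except with probability $2^{-\Omega(s)}$, and then $\hyp[S_0]$ is uniquely colorable by the fraction bound), so $O(1)$ rounds suffice in expectation; given a good seed, chaining colours all of $\vex$ (no bad set) and quasirandomness makes almost every witnessing edge turn up within $O(1)$ queries, so the successful round costs $n^{o(1)}+O(n)$ in expectation; the $n^2$-round cutoff and the ``stuck'' fall-back are never triggered on a nice $\hyp$ except with probability $2^{-\Omega(n^2)}$. Summing the contributions, the average running time is $O(n)+2^{-\Omega(n^2)}\cdot 2^{O(n)}=O(n)$.

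\textbf{Where the difficulty lies.} Two things need care. Conceptually, a legal $2$-coloring has no obvious short witness of legality, so $H$ must \emph{construct} one; the ``rigid brute-forced seed plus chained forced-propagation transcript'' is exactly such a witness, and is what keeps the algorithm Las Vegas without ever spending $\Theta(n^k)$ time. Technically, because finding a legal $2$-coloring of a promised-$2$-colorable $k$-graph in polynomial time is a well-known open problem, $T_A^{\mathrm{wc}}$ is super-polynomial, so the fall-back to $A$ may be triggered only on a set of inputs of exponentially small measure. This is what forces the seed size into the narrow window $\Theta_k((\log n)^{1/(k-1)})$ (small enough to brute-force within the $O(n)$ budget, large enough that rigidity of $\hyp[S_0]$ holds with the needed margin), forces the use of \emph{chaining} so that failure of propagation becomes a $2^{-\Omega(n^{k-1})}$ property of $\hyp$ rather than of the seed, and forces a merely polynomially unlikely bad seed to be absorbed by \emph{restarting} rather than by the expensive fall-back. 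The bulk of the remaining work is to carry out the structural analysis with these (essentially exponential) error probabilities and to verify that the round/chaining bookkeeping indeed yields expectation $O(n)$.
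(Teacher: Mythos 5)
The paper's own proof of this theorem is a three-line corollary of Theorem~\ref{main theorem}: invoke the \textsc{Coloring-Oracle} of Section~\ref{oracle section} on each of the $n$ vertices, and since $T_{\textsc{Coloring-Oracle}}(n)=O(1)$ (Lemma~\ref{good hypergraph average case complexity}) with the ``max over vertices'' convention, linearity of expectation gives $O(n)$. You take a genuinely different and much heavier route: a fresh Las Vegas algorithm built from a brute-forced ``rigid seed'' of size $s=\Theta_k((\log n)^{1/(k-1)})$ plus chained forced propagation, with Theorem~\ref{algorithm theorem} as the rare-case fallback. Conceptually the two are kin --- the paper's constant-size $K_{\ell,\ell}$ with $\ell=5k$ is exactly a rigid seed (Claim~\ref{uniquecoloirng}), and Steps $4$--$7$ of Algorithm~1 are a depth-$2$ forced propagation from it --- but the paper's choice of a \emph{structured} constant-size seed, whose abundance is enforced by item~$(i)$ of Definition~\ref{definition:good-hypergraph} and whose depth-$2$ reach is enforced by item~$(ii)$, collapses all of your chaining/bookkeeping to two explicit rounds and makes Theorem~\ref{main theorem} (not just Theorem~\ref{randomized algorithm}) available.

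There are also substantive gaps in what you wrote, beyond the cosmetic bug in your ``bad set'' definition (as stated, $W=V$ is always a bad set). The central unproved claim is that for $s=\Theta_k((\log n)^{1/(k-1)})$ a $1-o(1)$ fraction of $s$-sets induce a subhypergraph that is uniquely $2$-colorable, and that the fraction concentrates to within $2^{-\Omega(n^2)}$. The union bound over alternative bipartitions of $S_0$ is delicate at the extremes --- a single-misplaced-vertex bipartition only gives $\Theta(s^{k-1})$ cross-$k$-sets to kill it, which for your $s$ is $\Theta(\log n)$, a factor you burn entirely when you then need $n^{-\Theta(1)}$ per-seed failure and McDiarmid over seeds --- and you never carry this out. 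Similarly, the chaining and ``stuck'' detection are not specified (how many $k$-sets do you sample per vertex before declaring the propagation stuck? does a polynomially-likely initially-uncolorable vertex slow a round down?), and your blanket assertion that ``except with probability $2^{-\Omega(n^2)}$, $\hyp$ is obtained by fixing a balanced bipartition and including each cross $k$-set independently with probability $1/2$'' needs the kind of planted-model comparison the paper does carefully in Section~\ref{section:good-hypergraph-proofs}. None of this is obviously fatal, but as written the proposal is a plausible sketch rather than a proof, whereas the intended argument is immediate once Theorem~\ref{main theorem} is in hand.
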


If $A$ is a randomized $2$-coloring algorithm, then $T_A(\hyp)$ now denotes the {\em expected} running time of $A$ on input $\hyp$, and $T_A(n)$ denotes the average of $T_A(\hyp)$ over all
$2$-colorable $k$-graphs on $n$ vertices. Theorem \ref{randomized algorithm} thus states that there is a randomized $2$-coloring algorithm $A$
satisfying $T_A(n)=O(n)$. This average case running time is optimal since we need $O(n)$ time just to output the coloring of $\hyp$.

\subsection{A Fast Coloring Oracle}

We now turn to introducing our main result in this paper, which shows that hypergraph $2$-colorability is even easier on average than merely being solvable in polynomial time as in Theorems \ref{algorithm theorem} and \ref{randomized algorithm}. (In fact, the algorithms and techniques for this result directly imply Theorems \ref{algorithm theorem} and \ref{randomized algorithm} as well.) To this end we will introduce a {\em local} coloring algorithm in Definition
\ref{deforacle} below. Our main inspiration for this definition are the notion of a {\em Partition Oracles} from the area of sublinear time algorithms and the emerging area of {\em Local Computing Algorithms (LCA)}. Let us now describe Partition Oracles, postponing to Subsection \ref{subseclca} the discussion of LCAs and prior relevant work. The notion of partition oracle was first introduced in \cite{Hassidim2009LocalGP} and was further studied in \cite{improve_oracle,kumar2021randomwalksforbiddenminors}. These are sublinear, in fact $O(1)$, time algorithms that supply oracle access to a vertex partition of every planar graph\footnote{The results actually hold for more general ``minor closed'' families of graphs.} so that every component is of size $O(1)$ and there are $o(n)$ edges connecting these components. As is the case in sublinear time algorithms, we assume that the algorithm can quickly query the input object. In our case,
the algorithm can query whether any $k$-tuple of vertices is an edge in $\hyp$.
Given the discussion above, we introduce the following definition.

\begin{definition}[Coloring-Oracle]\label{deforacle}
A coloring-oracle is a randomized algorithm $A$ whose input is a vertex $u$ in a $2$-colorable $k$-graph $\hyp$.
The algorithm $A$ can access $\hyp$ using edge queries of the form ``is $(v_1,\ldots,v_k)$ an edge in $\hyp$''?
The algorithm should always return a color $0/1$ for $u$. Furthermore:
\begin{itemize}
\item[$(i)$] For every sequence of queries $u_1,u_2\ldots$, the answers of $A$ are consistent with a single legal $2$-coloring of $\hyp$.
\item[$(ii)$] The oracle uses only $o(n)$ memory for keeping information between different calls.
\end{itemize}
\end{definition}

Observe that without requirement $(i)$ the definition would be trivial since the oracle could just
answer $0$ for every vertex, as every vertex can be colored $0$ in some $2$-coloring of $\hyp$. Requirement $(ii)$ is also
necessary, since if the algorithm is allowed to keep $n$ bits of information, then when it is first called it can find a legal $2$-coloring, store it in memory, and then use it to answer subsequent calls.
A moment's reflection reveals\footnote{Indeed, given $u$ the algorithm asks about all possible edges of $\hyp$, then looks for the lexicographically first $2$-coloring of $\hyp$ and returns the color it assigns to $u$.} that there is in fact a (not very efficient) algorithm satisfying Definition~\ref{deforacle}.

Given Theorems \ref{algorithm theorem} and \ref{randomized algorithm} and the success of partition oracles, it is natural to ask
if it is possible to design a coloring oracle that is efficient on average. Let us then introduce the following
definition. If $A$ is a coloring oracle, then we use $T_A(\hyp,u)$ to denote the expected time it takes $A$ to return a color for $u$
in $\hyp$. Since we are considering average case behavior, it might seem natural to define $T_A(\hyp)$ as the average of
$T_A(\hyp,u)$ over all vertices of $\hyp$, but we instead make a stronger requirement and define it as the {\em worst case} over all vertices, that is $T_A(\hyp)=\max_uT_A(\hyp,u)$. We finally define $T_A(n)$ as the average of $T_A(\hyp)$ over all $2$-colorable $k$-graphs on $n$ vertices.
Since hypergraph $2$-colorability is NP-hard we certainly do not expect to have a coloring oracle $A$ for which $T_A(\hyp)$
is sub-exponential for every $\hyp$. Surprisingly, there is a coloring oracle that on average is as efficient as possible.

\begin{theorem} \label{main theorem}
There is a coloring oracle $A$ satisfying $T_{A}(n)=O(1)$. Furthermore, the oracle does not use {\bf any} memory for keeping information between
successive calls.
\end{theorem}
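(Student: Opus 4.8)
The plan is to build the oracle around two facts about a uniformly random $2$-colorable $k$-graph $\hyp$: its (essentially unique) legal $2$-coloring is \emph{quasirandom}, and, consequently, it is \emph{locally certifiable}, in the sense that for every vertex $u$ there are abundant $O(1)$-size ``certificates'' that determine the color of $u$ relative to a fixed reference vertex. Fix the reference vertex to be vertex $1$. Given a query $u\neq 1$, the oracle repeatedly samples a uniformly random set $W\subseteq\vex\setminus\{1,u\}$ with $|W|=m$ for a sufficiently large constant $m$, reads the sub-hypergraph $\hyp[S]$ induced on $S:=\{1,u\}\cup W$ using the $\binom{m+2}{k}=O(1)$ possible edge queries, and tests whether $\hyp[S]$ is \emph{$u$-rigid}: whether every legal $2$-coloring $\psi$ of $\hyp[S]$ with $\psi(1)=0$ assigns the same color to $u$. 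As soon as a sampled $W$ yields a $u$-rigid $\hyp[S]$, the oracle outputs this common color of $u$. Running in parallel (alternating steps), it also executes the brute-force procedure that queries all of $\edg$, computes the lexicographically-first legal $2$-coloring $\chi^{\mathrm{lex}}$ of $\hyp$, and would return $\chi^{\mathrm{lex}}(u)$; the oracle outputs whichever of the two branches halts first, and for $u=1$ it simply returns $0$.

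I first argue that this is a valid coloring-oracle. Let $f(u)$ denote the color of $u$ in $\chi^{\mathrm{lex}}$; note $f(1)=0$, since flipping all colors of a legal coloring is again legal. If a query is answered by the sampling branch, the returned value is the color forced for $u$ by some $u$-rigid $\hyp[S]$; since $\chi^{\mathrm{lex}}|_S$ is a legal $2$-coloring of $\hyp[S]$ taking value $0$ on vertex $1$, that forced color must equal $f(u)$. Hence \emph{every} answer of the oracle, on \emph{any} sequence of queries, equals $f(u)$, so all answers are consistent with the single legal $2$-coloring $\chi^{\mathrm{lex}}$; this is requirement $(i)$. Requirement $(ii)$ and the ``no memory'' addendum are immediate: each call is self-contained, and within a call only $O(1)$ bits are stored (the sub-hypergraph $\hyp[S]$, whose $\leq 2^{m+2}$ colorings are enumerated to perform the rigidity test).

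It remains to show $T_A(n)=O(1)$. Call $\hyp$ \emph{typical} if it satisfies the quasirandomness properties of $2$-colorable $k$-graphs established in the preceding sections; all but a $2^{-\omega(n)}$ fraction of $2$-colorable $k$-graphs are typical. The core claim is that for a typical $\hyp$ and \emph{every} vertex $u$, a uniformly random $m$-set $W$ makes $\hyp[S]$ be $u$-rigid with probability at least $\tfrac12$. Granting this, on a typical $\hyp$ every query is answered by the sampling branch after $O(1)$ expected rounds of $O(1)$ work each, so $T_A(\hyp)=\max_u T_A(\hyp,u)=O(1)$; on the remaining, atypical, $\hyp$ the brute-force branch terminates in $2^{O(n)}$ time, and since these constitute only a $2^{-\omega(n)}$ fraction their total contribution to the average is $o(1)$. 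Therefore $T_A(n)=O(1)$.

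Proving the core claim is the main obstacle. The point is that for typical $\hyp$ and any random $m$-set $W$, the induced hypergraph $\hyp[S]$ has $\Theta(m^k)$ edges on its $m+2$ vertices, which split across color classes of size $(\tfrac12\pm o(1))m$ — far denser than the threshold beyond which a random $2$-colorable hypergraph has a unique legal $2$-coloring up to swapping the two colors; fixing the color of vertex $1$ then pins the colors of all vertices of $S$, in particular $u$, so $\hyp[S]$ is $u$-rigid. Making this precise requires $(a)$ using quasirandomness of $\hyp$ to control the distribution of $\hyp[S]$ for a random $S$ — crucially, since quasirandomness is a deterministic property of $\hyp$, the resulting estimate holds \emph{uniformly over all} $u$, which is exactly what avoids a union-bound failure over the $n$ vertices; and $(b)$ showing that an $(m+2)$-vertex hypergraph drawn in this way is uniquely $2$-colorable up to swap with probability bounded away from $0$ once $m$ is a large enough constant, via the standard argument that an alternative coloring which is ``mixed'' relative to the true one would leave some edge monochromatic under one of the two colorings, which is ruled out as soon as every vertex is incident to enough edges (here, degree $\Theta(m^{k-1})$). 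Finally, replacing the random $W$ by an exhaustive search over all $(m+2)$-subsets turns the oracle into a deterministic procedure that computes $f(u)$ in $n^{O(k)}$ time on typical inputs, reproving Theorem~\ref{algorithm theorem}, while answering all $n$ vertex queries of the randomized oracle reproves the $O(n)$ bound of Theorem~\ref{randomized algorithm}.
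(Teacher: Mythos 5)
Your overall architecture matches the paper's quite closely: you anchor on vertex~$1$, use the lexicographically-first coloring as the implicit reference so answers are automatically consistent with no shared memory, interleave a sampling branch with a brute-force fallback, and rely on a deterministic ``typicality'' property of $\hyp$ that holds for all but a $2^{-\Omega(n)}$ fraction of $2$-colorable $k$-graphs, so that the exponential cost of atypical inputs is washed out. The consistency argument via lex-first coloring and anchoring at vertex $1$ is correct and is essentially the paper's proof of Lemma~\ref{oracle corectness}.

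Where you diverge is the choice of witness: you sample a random $m$-set $W$, read the \emph{entire induced} subhypergraph $\hyp[\{1,u\}\cup W]$, and test whether it is $u$-rigid. The paper instead samples a $(2\ell+k-1)$-tuple and checks for the very specific substructure $K_{\ell,\ell}$ (which is uniquely $2$-colorable by a one-line argument, Claim~\ref{uniquecoloirng}) together with explicit length-$2$ ``paths'' — $(k{-}1)$-tuples witnessing forcing — to vertex~$1$ and then to $u$. Crucially, the paper's path vertices are \emph{not} required to live inside the sampled tuple; they are sampled separately and only a constant number of edge-queries (the ones along the path and into $A,B$) are made, so the ``good'' property of Definition~\ref{definition:good-hypergraph} only needs to guarantee that such paths are dense in the \emph{whole} hypergraph. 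This is what makes Lemmas~\ref{lemma: good-hypergraph property i}--\ref{lemma: good-hypergraph property iii} elementary concentration statements.

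The gap in your proposal is the core claim: that for every typical $\hyp$ and every $u$, a random $m$-set $W$ makes $\hyp[\{1,u\}\cup W]$ $u$-rigid with probability $\geq 1/2$. You flag this yourself (``making this precise requires (a) \ldots\ and (b) \ldots''), but neither (a) nor (b) is carried out, and they are not routine. For (a), you need to specify a deterministic property $\Prop$ of $\hyp$ that simultaneously (1) holds for all but a $2^{-\Omega(n)}$ fraction of $\T^{(k)}_n$ and (2) implies, uniformly over $u$, that a $\Theta(1)$-fraction of $m$-sets $W$ induce a $u$-rigid subhypergraph. Invoking ``quasirandomness'' does not do this by itself: a random $m$-set picks up only $O(1)$ vertices, and controlling the joint distribution of all $\binom{m+2}{k}$ induced edge-indicators under a deterministic quasirandomness hypothesis is a genuine local-sampling/counting statement, not an off-the-shelf fact. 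For (b), the ``standard argument'' that such a small dense planted hypergraph is uniquely $2$-colorable up to swap is believable but needs a union bound over all balanced and unbalanced alternative partitions of the $m{+}2$ vertices, together with a lower bound on the probability of landing a balanced split of $W$ across the color classes. By contrast, the paper avoids the entire question of unique colorability of a random induced subhypergraph by using $K_{\ell,\ell}$, whose unique colorability is immediate, and propagating via explicitly counted short paths (property~(ii) of Definition~\ref{definition:good-hypergraph}). So the route you sketch could plausibly be made to work, but as written it leaves the technically hardest step — the analogue of Lemma~\ref{good hyper graph probability over all bipartite hypergraphs} — unproved, whereas the paper's witness structure was chosen precisely to make that step elementary.
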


We find it quite surprising that it is possible to not use any shared memory between successive calls, and still answer every query consistently in average case $O(1)$ time. Thus the main conceptual contribution
of this work is demonstrating just how much more efficient can Oracles and LCAs be on average, compared to their worst case behavior.

Comparing coloring oracles and partition oracles, in addition to differing algorithmic goals\footnote{A partition oracle seeks to partition the graph into $O(n)$ sets with as few edges as possible between them, while a coloring oracle's goal is to partition the graph into $O(1)$ sets with no edges inside them.}, note that a coloring oracle handles all inputs and solves the $2$-coloring problem exactly, while a partition oracle only handles planar graphs and only solves an approximate problem.
On the other hand, partition oracles always work in $O(1)$ time, while a coloring oracle only works in $O(1)$ time on average over the set of colorable graphs.

\subsubsection{Transforming the Coloring Oracle into an LCA}\label{subseclca}

In recent years, there has been extensive work on a new model of distributed computing known as Local Computation Algorithms (LCAs) (see Definition~\ref{LCA definition})~\cite{rubinfeld2011fastlocalcomputationalgorithms, alon2011spaceefficientlocalcomputationalgorithms, beyond_worst_case_LCA}.
In this model, the algorithm is given probe access to the input object (in this case, the $k$-graph) and a fixed random string, and must answer queries regarding a particular combinatorial structure defined on it (in this case, the color of a vertex $u$ in a $2$-coloring).
The answers must be globally consistent, each query must be answered with sublinear work, and there is no persistent memory between queries. We observe that our result implies an \textit{average-case} LCA for coloring, which we now discuss.

LCAs for vertex coloring (over worst-case graphs) have been the subject of extensive study. Much of this has focused on $(\Delta +1)$-coloring, where $\Delta$ is the maximum degree of the input graph. First, it is known that $r$-round algorithms in the distributed LOCAL model over graphs with maximum degree $\Delta$ can be used to construct LCAs with query processing time $O(\Delta^r)$ using a reduction of Parnas and Ron \cite{parnas2007approximating}. Applying this to the result of \cite{chang2018optimal} in the distributed LOCAL model yields a $\Delta^{O(\log^*(\Delta))}\cdot \log(n)$ time $(\Delta + 1)$-coloring LCA. Notably, \cite{chang2018complexitydelta1coloring} later constructed a $\Delta^{O(1)}\cdot O(\log(n))$ time LCA for $(\Delta+1)$-coloring, also proving results in other distributed and sublinear models.
For hypergraphs, when a two-coloring is guaranteed by the Lov\'asz Local Lemma, the work of \cite{dorobisz2023local}
gives an LCA that answers queries in polylogarithmic time.  Note that the latter result applies when there
is a bound on the degree of the hypergraph, whereas the graphs we consider are usually dense.
Recent papers have also explored LCAs for 3-coloring and 2-coloring graphs and hypergraphs
typically with additional assumptions made, such as access to linear preprocessing probes \cite{10.1007/978-3-030-04693-4_12}, or answering only up to polylogarithmic many queries \cite{achlioptas2020simple}.

Graph and hypergraph coloring have been studied in other sublinear access models \cite{assadi2019sublinear} and through the lens of property testing \cite{czumaj2001testing, czumaj2005abstract, AaronsonCC25}. Additionally, substantial effort has gone into designing (worst-case) LCAs for a variety of other
fundamental problems, including
maximal independent set~\cite{10.5555/2884435.2884455, ghaffari2022localcomputationmaximalindependent,
M.ghaffari2018sparsifyingdistributedalgorithmsramifications, levi2015localcomputationalgorithmsgraphs},
and maximal matching~\cite{levi2015localcomputationalgorithmsgraphs, 10.1145/1536414.1536447, mansour2013localcomputationapproximationscheme}.

A recent paper \cite{beyond_worst_case_LCA} initiated the study of LCAs over average-case inputs. An oracle is an \textit{average-case LCA} for a problem $\Pi$ over a distribution $\mathcal{G}$ over objects if, with probability at least $(1 - \frac{1}{n})$ over $G \leftarrow \mathcal{G}$, the oracle is an LCA for $\Pi$ on $G$. For hypergraph $2$-coloring, the key distinction between average-case LCAs and coloring oracles is the allowed failure probability (the LCA can fail on some inputs, while the coloring oracle must always return a $2$-coloring).

We observe that our result implies a very efficient average-case LCA for coloring, which also holds for $2$-colorable $2$-graphs due to the different failure criterion.
\begin{theorem}\label{LCA partition oracle}
For all $k \geq 2$, there is an average-case LCA for the uniform distribution over $2$-colorable $k$-graphs with worst-case probe complexity of $O(1)$ and runtime per query of $O(1)$\footnote{This is in the LCA model where the algorithm is given access to a random string consisting of \textit{words} i.e. random entries in $[n]$. A random word corresponds to getting $O(\log n)$ bits of randomness in a step.}. Moreover, the average-case LCA does not require shared randomness between queries.
\end{theorem}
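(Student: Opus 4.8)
The plan is to bootstrap the coloring oracle $A$ of Theorem~\ref{main theorem} into an average-case LCA, exploiting that the average-case LCA definition tolerates arbitrary behavior on a $\tfrac1n$-fraction of the inputs. Concretely, we (a) truncate the running time of $A$ to a constant number of steps, absorbing into the bad $\tfrac1n$-fraction those hypergraphs on which the truncation could conceivably destroy consistency, and (b) handle the case $k=2$, not covered by Theorem~\ref{main theorem}, by a short direct construction. That no shared randomness is needed between queries will come essentially for free from the fact that $A$ already keeps no memory between calls.

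The first and main step is to look inside the proof of Theorem~\ref{main theorem} rather than to use it as a black box. Note that $T_A(n)=O(1)$ is an average of $\max_u T_A(\hyp,u)$, so Markov's inequality applied to it only yields a $(1-\tfrac1n)$-fraction of hypergraphs on which every query has \emph{expected} running time $O(n)$ — far too weak for a worst-case $O(1)$-per-query LCA. What I would instead extract is the quantitatively stronger statement that the proof really establishes: there is a constant $C=C(k)$ and a set $\Gamma$ of $2$-colorable $k$-graphs on $n$ vertices, of measure at least $1-\tfrac1n$ (in fact $1-2^{-\Omega(n)}$) among all such graphs, such that for every $\hyp\in\Gamma$ and every vertex $u$, a call of $A$ on $u$ probes at most $C$ edges and halts within $C$ steps, except with probability at most $\tfrac{1}{3n^2}$ over $A$'s own coins. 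This matches the shape of the oracle: on a typical $\hyp$ it reads only $O(1)$ uniformly random vertices/edges; ``goodness'' of $\hyp$ is designed so that all but a $1/\poly(n)$-fraction of the constantly many sampling outcomes immediately reveal the canonical color of $u$; and the rare slow events (in which $A$ falls back to a costly exact routine) are precisely what the ``$O(1)$ on average'' in Theorem~\ref{main theorem} pays for. The step I expect to be the main obstacle is exactly this: verifying that the proof indeed delivers both a good set of measure $1-\tfrac1n$ and a $1/\poly(n)$ per-coin tail bound, and that $\Gamma$ is the same for all $n$ possible queries, since the clean black-box bound $T_A(n)=O(1)$ is genuinely insufficient on its own.

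Granting this, the LCA is immediate. On query $u$ it runs $A$, feeding $A$'s $O(1)$ random vertex samples from the $O(\log n)$-bit words of the random string, aborts after $C$ steps, and outputs $0$ if it aborted. For every $\hyp\in\Gamma$ this satisfies the LCA requirements: the probe complexity and the per-query runtime are worst-case $O(C)=O(1)$ by construction; and since each of the $\le n$ vertices triggers an abort with probability at most $\tfrac{1}{3n^2}$, a union bound shows that with probability at least $1-\tfrac{1}{3n}>1-\tfrac1n$ over the random string no query aborts on any vertex, in which case the LCA coincides with $A$ and hence, by property~$(i)$ of Definition~\ref{deforacle}, returns answers consistent with a single legal $2$-coloring of $\hyp$. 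Thus the LCA is a valid LCA on every $\hyp\in\Gamma$, and $\Gamma$ has measure $\ge1-\tfrac1n$ under the uniform distribution, which is exactly the definition of an average-case LCA. Finally, since $A$ keeps no state between calls and, on a good $\hyp$, its \emph{output} (as opposed to its running time) is forced to be the value that one fixed canonical $2$-coloring assigns to $u$ — otherwise two separate queries to the same $u$ could disagree, violating property~$(i)$ — fresh randomness may be used for each query, so no shared randomness is required.

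It remains to treat $k=2$, where we again only need average-case correctness. A $1-2^{-\Omega(n)}$ fraction of $2$-colorable $2$-graphs on $n$ vertices are connected and, being connected and bipartite, have a unique bipartition which — by the same first-moment estimate that pins down the typical structure for $k\ge3$ — is balanced up to $o(n)$ and behaves like $G(\tfrac n2,\tfrac n2,\tfrac12)$; in such a graph two vertices lie on the same side iff they have a common neighbor, with same-side pairs having $\Theta(n)$ common neighbors and opposite-side pairs none. So I would fix vertex $1$ to have color $0$ and, on query $u$, sample $O(1)$ random vertices and output $0$ iff at least one of them is a common neighbor of $u$ and vertex $1$. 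On the good graphs this is correct (hence globally consistent), runs in worst-case $O(1)$ time, uses no shared state, and fails only on a $2^{-\Omega(n)}\le\tfrac1n$ fraction of inputs — all that an average-case LCA asks for.
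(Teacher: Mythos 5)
You are right that the black-box bound $T_A(n)=O(1)$ is too weak and that one must reopen the proof of Theorem \ref{main theorem}; the paper does exactly that, reusing the goodness property and Lemma \ref{good hyper graph probability over all bipartite hypergraphs}. But the quantitative statement you then extract --- that on every good $\hyp$ a call halts within a constant $C$ steps except with probability $\tfrac{1}{3n^2}$ over its own coins --- is not what the proof delivers, and your construction hinges on it. Goodness only guarantees that each sampled tuple succeeds with \emph{constant} probability ($2^{-2^{10k}}\cdot k^{-4k}$ in Lemma \ref{good hypergraph average case complexity}), so truncating after $C=O(1)$ geometric trials leaves a \emph{constant} abort probability, not $1/\mathrm{poly}(n)$; a $1/\mathrm{poly}(n)$ tail would force $C=\Theta(\log n)$ probes. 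Consequently your union bound over queries fails, and with fresh per-query randomness even two queries to the same vertex can disagree (one aborts and outputs $0$, the other returns the canonical color). The paper avoids truncation entirely: Definitions \ref{LCA definition}--\ref{def:avg-case-lca} measure probe complexity in \emph{expectation over the coins}, so Algorithm 3 keeps the unbounded sampling loop (which on a good $\hyp$ terminates with probability $1$ in $O(1)$ expected probes) and only replaces the exhaustive-search fallback of Step 7 by ``return $0$''. This preserves the crucial feature that on a good input the \emph{output} is a deterministic function of $(\hyp,u)$, which is precisely why fresh randomness per query (or, formally, per-vertex blocks of the tape) causes no consistency problem.

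Your $k=2$ construction has the same defect plus a structural one. Sampling $O(1)$ vertices and outputting $0$ iff one is a common neighbor of $u$ and vertex $1$ errs with constant probability when $u$ is on vertex $1$'s side (you may simply miss all $\Theta(n)$ common neighbors), and there is no $O(1)$-probe positive certificate in your scheme for the ``opposite side'' answer --- certifying the \emph{absence} of common neighbors takes $\Omega(n)$ probes, so you cannot repair it by sampling until success. The paper instead runs the identical Algorithm 3 for $k=2$ (where both colors have length-two path certificates through the $K_{\ell,\ell}$) and observes that the goodness analysis of Section \ref{section:good-hypergraph-proofs} still yields failure probability below $1/n$ in that case.
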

We view this as the most nontrivial example of an average-case LCA so far~\cite{beyond_worst_case_LCA}.
We present the proof in Appendix \ref{LCA partition oracle section}.

\subsection{Key New Idea and Comparison to Prior Works} \label{section:proof-sketch}

The key idea in many average case algorithms is to define a property ${\cal P}$ which is useful in the following sense: on one
hand almost all objects (graphs, hypergraphs, etc) satisfy ${\cal P}$, and on the other hand every object satisfying ${\cal P}$
is easy to solve. As we observed earlier, it is not enough for ${\cal P}$ to hold for $(1-o(1))$-fraction or even for $(1-2^{-n/2})$-fraction
of the objects since that could result in an exponential average case running time if the objects without property ${\cal P}$ take exponential time. It is interesting to note that a similar challenge of coming
up with a useful property appears also in the design of regularity lemmas for graphs \cite{SzemerediReg} and hypergraphs \cite{Gowers,RNSS}.
There, the goal is to come up with a property that is strong enough for the purposes of applying the lemma, and weak enough to be satisfied
by every graph. It is thus no coincidence that the algorithms of Person and Schacht \cite{PersonSchacht1,PersonPolynomialAlog} and Lee, Molla, and Nagle \cite{lee2024twocoloringbipartiteuniformhypergraphs} used useful properties involving notions related to graph and hypergraph regularity.
Hence, they also used the graph/hypergraph regularity lemmas and algorithms \cite{ADLRY}, leading to huge hidden constants and a complicated analysis. We show there is a very simple-to-state useful property, which
we describe in Section \ref{good hypergraph properties}. While it takes
some work to show that most hypergraphs satisfy it, designing an efficient algorithm for hypergraphs satisfying it is almost trivial.
Moreover, the property directly leads to the
coloring oracle of Theorem \ref{main theorem}. Very roughly, the property states that there is a small substructure that has a unique
coloring (what we call $K_{\ell,\ell}$ in Section \ref{good hypergraph properties}) such that the (unique) coloring of this structure
uniquely determines the color of {\em all} vertices of the hypergraph. Most importantly, the coloring of this small substructure forces
the colors of all other vertices via ``paths'' of short length (actually length $2$) and there are in fact ``many'' such paths, making it easy to find one using sampling. We can also take advantage of this property in order to avoid using any memory between successive calls and still maintain the consistency of the coloring. The most (actually, only) technically demanding part of the paper is thus not the design or the analysis of the algorithms, but the proof of Lemma \ref{good hyper graph probability over all bipartite hypergraphs} regarding properties of typical $2$-colorable $k$-graphs. See the end of Section \ref{good hypergraph properties} for a brief description of the proof of this lemma, and Section \ref{section:good-hypergraph-proofs} for the full proof.

\subsection{Paper Overview}

In Section \ref{good hypergraph properties} we formally define the useful hypergraph property that underpins all our algorithms here and state
the key probabilistic fact regarding this property, see Lemma \ref{good hyper graph probability over all bipartite hypergraphs}.
In Section \ref{coloring algorithm section} we prove Theorem \ref{algorithm theorem}. To this end, we present a new average case polynomial time deterministic 2-coloring algorithm which relies on the useful property introduced in Section \ref{good hypergraph properties}.
In Section \ref{oracle section} we prove Theorems \ref{randomized algorithm} and \ref{main theorem}. To do so, we make subtle adjustments to
the algorithm presented in Section \ref{coloring algorithm section}. In Section \ref{section:good-hypergraph-proofs} we prove Lemma \ref{good hyper graph probability over all bipartite hypergraphs} stated in Section \ref{good hypergraph properties}. Finally, in Section \ref{LCA partition oracle section} we introduce an LCA version of the coloring oracle from Section \ref{oracle section}.

\section{A Useful Property for Coloring Hypergraphs}\label{good hypergraph properties}

Our goal in this section is to define the useful hypergraph property alluded to in the proof overview above.
An {\em independent set} $I$ in a $k$-graph is a set of vertices so that none of the edges of the hypergraph is fully contained in $I$.
Note that a $2$-coloring of a hypergraph naturally partitions its vertex set into two independent sets, namely those that are colored red
and those that are colored blue. Given a $2$-colorable $k$-graph $\mathcal{H}$, when we refer to the two independent sets of $\mathcal{H}$, we mean the two independent sets given by some $2$-coloring of $\mathcal{H}$. Note that (unless specified otherwise) we do not assume that $\mathcal{H}$ has a unique $2$-coloring.
For a vertex $u \in \vex$ and a set of vertices $A \subseteq \vex$, we use $N(u,A)$ to denote the set of $(k-1)$-tuples of vertices in $A$ which
form an edge together with $u$. So $N(u,A)$ are the neighbors of vertex $v$ in the set $A$, but as opposed to 2-graphs, where the neighbors of a vertex are also vertices, now the neighbors are $(k-1)$-tuples of vertices.
For an integer $\ell \geq k-1$ the $k$-graph $K^{(k)}_{\ell,\ell}$ is the one consisting
of two vertex sets $A,B$ of size $\ell$ and of all the edges that have a single vertex in either $A$ or $B$ and $k-1$ vertices in the other set.
To simplify the notation, we will use $K_{\ell,\ell}$ instead of $K^{(k)}_{\ell,\ell}$.
We first observe the following simple fact.

\begin{claim}\label{uniquecoloirng}
For every $\ell \geq 2k-3$, the $k$-graph $K_{\ell,\ell}$ has a unique $2$-coloring.
\end{claim}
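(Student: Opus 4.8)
The plan is to show that in any legal $2$-coloring of $K_{\ell,\ell}$ (with parts $A,B$ of size $\ell$ and $\ell\ge 2k-3$), both $A$ and $B$ must be monochromatic, and moreover they must receive opposite colors; since swapping the two colors is the one remaining degree of freedom, this pins down the coloring up to the global color swap, i.e. it is unique in the sense the paper uses. First I would observe that an edge of $K_{\ell,\ell}$ consists of one vertex $x$ on one side and a $(k-1)$-subset $S$ on the other side, and such an edge is \emph{monochromatic} (hence illegal) exactly when $x$ and all of $S$ share a color. So legality says: for every side and every choice of $x$ on that side and $(k-1)$-set $S$ on the opposite side, it is not the case that $\{x\}\cup S$ is monochromatic.

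The key step is a counting/pigeonhole argument to force each side to be monochromatic. Suppose for contradiction that $A$ is not monochromatic, and likewise consider $B$. Let $A = A_0 \sqcup A_1$ and $B = B_0 \sqcup B_1$ be the partitions by color. If some color class, say $A_0$, has size $\ge k-1$, then picking any $(k-1)$-subset $S \subseteq A_0$ together with any vertex $b \in B_0$ (which is nonempty as long as $B$ is not entirely color~$1$) gives a monochromatic edge — contradiction. Thus to avoid an illegal edge of this type, for each side at most one color class can have size $\ge k-1$; equivalently, the minority color class on each side has size $\le k-2$. So write $|A_1| \le k-2$ and $|B_1| \le k-2$ (relabeling colors per side if needed), meaning $A$ is "almost all color~$0$" and $B$ is "almost all color~$0$"; but then $|A_0| \ge \ell - (k-2) \ge (2k-3)-(k-2) = k-1$ and likewise $|B_0| \ge k-1$, so we can pick a $(k-1)$-subset of $A_0$ and a vertex of $B_0$ — again a monochromatic edge, contradiction. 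Hence the only consistent possibility is that $A$ is entirely one color and $B$ is entirely one color. If they were the \emph{same} color then, since $\ell \ge k-1$, every edge would be monochromatic; so $A$ and $B$ get opposite colors, and the coloring is determined up to the global swap. (I would streamline the two cases above into a single clean statement: in a legal coloring, each side has at most $k-2$ vertices disagreeing with its majority color, and $\ell \ge 2k-3$ forces both majority classes to have size $\ge k-1$, which is the contradiction; then conclude monochromaticity.)

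The main obstacle — really the only subtle point — is getting the threshold $2k-3$ exactly right rather than something like $2k-2$ or $k$: one must be careful that it is not enough to make one side monochromatic in isolation, since a "mixed" side with a small minority class of size up to $k-2$ is individually legal, and it is the \emph{interaction} between the two mixed sides that produces the contradiction. Tracking that both sides can carry an independent slack of $k-2$ and still leave $\ge k-1$ vertices in each majority class is exactly where $\ell \ge (k-2)+(k-2)+1 = 2k-3$ enters. Everything else is routine: edges of $K_{\ell,\ell}$ are monochromatic iff the singleton side and the $(k-1)$-side agree in color, and a set of $\ge k-1$ monochromatic vertices on one side paired with one monochromatic vertex of the same color on the other side always spans an edge.
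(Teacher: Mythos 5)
Your overall strategy is the same as the paper's (pigeonhole on $\ell \ge 2k-3$ to produce $k-1$ like-colored vertices on one side, which then forces the color of the whole other side), but the final step of your contradiction argument has a genuine flaw. The move ``relabeling colors per side if needed'' is not a legitimate operation: the coloring is global, so you cannot independently rename the colors on $A$ and on $B$ so that both majority classes become ``color $0$''. If the actual majority color of $A$ is $0$ while the actual majority color of $B$ is $1$, then a $(k-1)$-subset of $A$'s majority class together with a vertex of $B$'s majority class is \emph{not} a monochromatic edge, and the contradiction you exhibit simply does not occur in that case. A telltale sign of the problem is that your closing argument (and the streamlined version in your parenthetical: ``both majority classes have size $\ge k-1$, which is the contradiction'') never uses the hypothesis that $A$ is non-monochromatic; if it were valid as stated, it would equally show that the intended coloring ($A$ entirely $0$, $B$ entirely $1$) is illegal, which is absurd.

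The repair is short and is essentially what the paper does. Let $c$ be the majority color of $A$, so $|A_c|\ge k-1$ by pigeonhole since $\ell\ge 2k-3$. Then no vertex of $B$ may receive color $c$ (it would form a monochromatic edge with $k-1$ vertices of $A_c$), so $B$ is entirely colored $1-c$, and in particular $\ell\ge k-1$ vertices of $B$ share color $1-c$. Now invoke the assumption that $A$ is not monochromatic: some $a'\in A$ has color $1-c$, and $a'$ together with any $k-1$ vertices of $B$ is a monochromatic edge --- contradiction. With that correction, the remainder of your write-up (both sides monochromatic, necessarily in opposite colors, hence uniqueness up to the global swap) is fine.
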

\begin{proof}
Let $A,B$ be the color classes in the definition of $K_{\ell,\ell}$, and consider any other $2$-coloring. Assume without loss of generality
that two vertices $a,a' \in A$ are assigned different colors. Since\footnote{Note that the assumption $\ell \geq 2k-3$ is indeed needed since when $\ell=2k-4$ there are different $2$-colorings of $K_{\ell,\ell}$ obtained by coloring $k-2$ of the vertices in $A,B$ with color $0/1$.} $\ell \geq 2k-3$ there are $k-1$ vertices in $A$ that received the same color. Suppose this is color $1$, that $a$ is colored $1$, and that $a'$ is colored $0$. It follows from the definition of $K_{\ell,\ell}$ that all vertices in $B$ must be colored $0$, since each $b \in B$ forms an edge with the $k-1$ vertices of $A$ that are colored $1$. But then every edge containing $k-1$ vertices in $B$ and vertex $a'$ is not colored properly.
\end{proof}

Given a copy of $K_{\ell,\ell}$ we will use $A,B$ to denote the two colors classes in its unique $2$-coloring, namely the two sets used
in the definition of $K_{\ell,\ell}$.
We are now ready to define the useful property, which we call {\em good}.

\begin{definition}[Good $k$-Graphs]\label{definition:good-hypergraph}
Suppose $\mathcal{H}$ is an $n$-vertex $2$-colorable $k$-graph. Set
    \begin{equation}\label{l constant function}
    \ell=\ell(k) = 5k.
    \end{equation}
    Then $\mathcal{H}$ is \textit{good} if it satisfies:
    \begin{itemize}
        \item[(i)] The number of copies of $K_{\ell,\ell}$ in $\mathcal{H}$ is at least $n^{2\ell}/2^{2^{10k}}$.
        \item[(ii)] The following holds for every copy $K$ of $K_{\ell,\ell}$ in $\mathcal{H}$. Suppose $A,B$ are the independent sets of $K$.
        Let $N_A$ be the vertices $v$ satisfying $N(v,A) \neq \emptyset$ and $N_B$ be the vertices $v$ satisfying $N(v,B) \neq \emptyset$.
        Then every vertex $u$ in $\mathcal{H}$ satisfies either $|N(u,N_A)| \geq n^{k-1}/k^{4k}$ or $|N(u,N_B)| \geq n^{k-1}/k^{4k}$.
    \end{itemize}
\end{definition}

See Figures 1 and 2 for illustrations of the good property and how the good property assists the algorithms.

Recall that for a property to be useful for our purposes here we first need all but a negligible fraction of the $2$-colorable $k$-graphs
to satisfy it. This is precisely the statement of the following lemma, whose (somewhat technical) proof appears in Section \ref{section:good-hypergraph-proofs}. In what follows $\T^{(k)}_n$ is the set of all $2$-colorable $k$-graphs on $n$ vertices, and
$\hyp\sim\T^{(k)}_n$ means that $\hyp$ is uniformly chosen from~$\T^{(k)}_n$.

\begin{lemma}\label{good hyper graph probability over all bipartite hypergraphs}
If $\hyp\sim\T^{(k)}_n$, then for all large enough $n$, $\hyp$ is good with probability at least $1 - 2^{-2n}$.
\end{lemma}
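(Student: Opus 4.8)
The plan is to compare the uniform distribution on $\T^{(k)}_n$ with a random $k$-graph having a \emph{fixed} bipartition, and then to prove the two defining properties there with failure probability $2^{-\poly(n)}$ (not merely $2^{-\Theta(n)}$, which would be useless for an average-running-time bound). Set $N^*=\binom{n}{k}-\binom{\lceil n/2\rceil}{k}-\binom{\lfloor n/2\rfloor}{k}$, the number of non-monochromatic $k$-sets under a balanced bipartition. Every subset of those sets is the edge set of a distinct $2$-colorable $k$-graph, so $|\T^{(k)}_n|\ge 2^{N^*}$. For a bipartition $(R,B)$ of $[n]$ with $|R|=r$, let $\mathcal G_{R,B}$ be the random $k$-graph that includes each of its $N(r):=\binom{n}{k}-\binom{r}{k}-\binom{n-r}{k}$ non-monochromatic $k$-sets independently with probability $\tfrac12$. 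Counting not-good $2$-colorable $k$-graphs through their legal colorings and dividing by $|\T^{(k)}_n|\ge 2^{N^*}$ gives
\[
\Pr_{\hyp\sim\T^{(k)}_n}\!\big[\hyp\text{ not good}\big]\ \le\ \sum_{r=0}^{n}\binom{n}{r}\,2^{N(r)-N^*}\,\Pr\big[\mathcal G_{R,B}\text{ not good}\big].
\]
A routine convexity estimate yields $N^*-N(r)\ge c_k\,n^{k-2}(r-n/2)^2$ for a constant $c_k>0$, so (here I use $k\ge3$) the terms with $|r-n/2|>\sqrt n$ sum to at most $2^{-6n}$. In the remaining ``balanced'' range one has $|R|,|B|\ge n/3$ and $\binom{n}{r}2^{N(r)-N^*}\le 2^n$, so it suffices to prove $\Pr[\mathcal G_{R,B}\text{ not good}]\le 2^{-5n}$ for every such $(R,B)$.

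\textbf{Condition (i).} By Claim~\ref{uniquecoloirng} (with $\ell=5k\ge 2k-3$), any copy of $K_{\ell,\ell}$ in $\mathcal G_{R,B}$ has its two classes monochromatic of opposite colors; so for $X\subseteq R$, $Y\subseteq B$ with $|X|=|Y|=\ell$ all $m:=2\ell\binom{\ell}{k-1}$ crossing edges are non-monochromatic, and $(X,Y)$ spans a copy with probability exactly $2^{-m}$. Hence the number $Z$ of copies satisfies $\mathbb E Z\ge \binom{|R|}{\ell}\binom{|B|}{\ell}2^{-m}\ge 2n^{2\ell}/2^{2^{10k}}$, where the last step is exactly where the constants $\ell=5k$ and $2^{2^{10k}}$ in Definition~\ref{definition:good-hypergraph} are calibrated (via a crude bound such as $m\le 2^{6k}$). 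Now $Z$ is a function of the $N(r)=\Theta(n^k)$ independent free-edge indicators, and flipping one of them changes $Z$ by at most $O_k(n^{2\ell-k})$, the number of pairs $(X,Y)$ using that edge as a crossing edge. McDiarmid's bounded-differences inequality then gives $\Pr[Z<\tfrac12\mathbb E Z]\le \exp\!\big(-\Omega_k\big(n^{4\ell}/(n^{k}\cdot n^{2(2\ell-k)})\big)\big)=\exp(-\Omega_k(n^{k}))$, which for $k\ge3$ is far below $2^{-5n}$; on the complementary event condition (i) holds.

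\textbf{Condition (ii).} Let $K$ be a copy of $K_{\ell,\ell}$ in $\mathcal G_{R,B}$ with classes $P\subseteq R$ and $Q\subseteq B$. A $k$-set $\{v\}\cup(\text{tuple in }P)$ with $v\in R$ lies entirely in $R$, hence is monochromatic and not an edge; therefore $N_P\subseteq B$, and symmetrically $N_Q\subseteq R$. Consequently $N(u,N_Q)=\emptyset$ for all $u\in R$, so condition (ii) at such a $u$ reduces to $|N(u,N_P)|\ge n^{k-1}/k^{4k}$ (symmetrically for $u\in B$). We union-bound over the at most $2n^{2\ell+1}$ choices of a copy $K$ and a vertex $u$. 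Fix $K=(P,Q)$ with $P\subseteq R$ and $u\in R$, and condition on the $m$ crossing edges of $K$. For $v\in B\setminus Q$ the events ``$v\in N_P$'' depend on disjoint edge-families, each occurring with probability $1-2^{-\binom{\ell}{k-1}}$; since $2^{-\binom{\ell}{k-1}}$ is astronomically small, Chernoff gives $|N_P|\ge |B|/2$ except with probability $2^{-10n}$. Crucially, because $k\ge3$ the edges defining $N_P$ (having $k-1\ge2$ vertices in $R$) are \emph{disjoint} from the edges $\{u\}\cup\pi$ with $\pi\subseteq N_P\subseteq B$ (having a single vertex in $R$); so conditioned on $N_P$ these $\binom{|N_P|}{k-1}-O_k(1)$ edges are still independent fair coins, and $|N(u,N_P)|$ stochastically dominates a $\Bin\!\big(\binom{|N_P|}{k-1}-O_k(1),\tfrac12\big)$ variable, which is at least $n^{k-1}/k^{4k}$ (this is where $k^{4k}$ is calibrated) except with probability $\exp(-\Omega(n^{k-1}))$. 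Thus each term of the union bound is at most $2^{-9n}$, so $\Pr[\mathcal G_{R,B}\text{ fails (ii)}]\le 2n^{2\ell+1}2^{-9n}\le 2^{-5n}$, and the first display finishes the proof.

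\textbf{The main obstacle.} Everything but the two concentration steps is routine first-moment counting; the real difficulty is that a single bad hypergraph can cost exponential time, so the failure probabilities must be sub-exponentially small, i.e.\ $\exp(-\poly(n))$. In the count of $K_{\ell,\ell}$'s this forces us to control the large $O_k(n^{2\ell-k})$ bounded-difference constant against only $\Theta(n^k)$ independent coins, which succeeds precisely because $k\ge3$. In condition (ii) it forces the delicate Chernoff estimate on $|N_P|$ (driven by the minuscule probability $2^{-\binom{\ell}{k-1}}$) and again uses $k\ge3$, this time to make the two relevant edge-families disjoint so that revealing $N_P$ does not destroy the freshness of the edges at $u$. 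The rest is bookkeeping: verifying that the explicit constants $\ell=5k$, $2^{2^{10k}}$, and $k^{4k}$ in Definition~\ref{definition:good-hypergraph} leave enough slack for all of these estimates to close.
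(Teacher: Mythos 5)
Your proposal is correct and follows essentially the same route as the paper: reduce to the planted model $\T^{(k)}_{S,n}$ by counting non-good graphs through their legal bipartitions, discard unbalanced bipartitions because the entire planted class is then exponentially smaller than $\T^{(k)}_n$, and establish goodness in the balanced planted model via McDiarmid for the $K_{\ell,\ell}$ count and a two-stage argument ($N_A$ is linear in size, then the links of $u$ into $N_A$ concentrate) for condition $(ii)$. The only cosmetic difference is that the paper avoids conditioning on $N_A$ by union-bounding over all vertex sets $X$ of size $n/8$ (its Lemma \ref{lemma: good-hypergraph property iii}), whereas you condition on $N_P$ and invoke disjointness of the two edge families; both are valid.
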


The main idea of the proof of Lemma \ref{good hyper graph probability over all bipartite hypergraphs} is the following. Consider an alternative probabilistic model for picking a random $2$-colorable $k$-graph on a set $V$ of $n$ vertices, which we denote $\T_{S,n}^{(k)}$: in this case we preselect two ``planted'' independent sets $S$ and $V \setminus S$ (where we will add no edges) and pick each edge intersecting $S$ and $V \setminus S$ with probability $1/2$. It is easy to show (see Lemma \ref{good hypergraph probability theorem}), using classical tail bounds, that a $k$-graph generated by $\T_{S,n}^{(k)}$ such that $n/4 \leq |S| \leq 3n/4$ is good with probability at least $1-2^{-3n}$. What is also ``clear'' is that $\hyp\sim\T_n^{(k)}$ is ``similar'' to $\hyp\sim\mathcal{T}_{S,n}^{(k)}$ when $|S|$ is close to $n/2$. Perhaps surprisingly, there is a very short proof making this intuition precise, thus essentially reducing Lemma \ref{good hyper graph probability over all bipartite hypergraphs}
to Lemma \ref{good hypergraph probability theorem}.

\section{Coloring Bipartite Hypergraphs}\label{coloring algorithm section}

Our goal in this section is to prove Theorem \ref{algorithm theorem}.
Throughout this section we use $\ell=5k$ as in Definition \ref{definition:good-hypergraph}.
By Lemma \ref{good hyper graph probability over all bipartite hypergraphs} only a tiny fraction of the bipartite $k$-graphs are not good. It is thus easy to see that in order to prove Theorem \ref{algorithm theorem} it is enough to design a deterministic algorithm which finds in polynomial time a $2$-coloring of every good $2$-colorable $k$-graphs.
For notational reasons we will sometimes use the term ``bipartite'' instead of ``$2$-colorable''.

Before presenting the concrete algorithm, we give an informal description of it.
The algorithm first uses exhaustive search in order to find a copy of $K_{\ell,\ell}$.
If $\hyp$ is good then we know that it has many copies of $K_{\ell,\ell}$.
Letting $A,B$ be the independent sets in the unique $2$-coloring the $K_{\ell,\ell}$ the algorithm found, it then colors $A$ with $0$ and $B$ with $1$.
The algorithm now looks for all vertices $v$ which form an edge with $k-1$ vertices from $A$ or $B$.
Note that the color of such a $v$ is uniquely determined by the coloring of $A$ and $B$ and thus we color them accordingly.
The algorithm now looks for all vertices $v$ which form an edge with $k-1$ vertices that were colored $0$ or $1$ in the previous step.
Again, the color of such a $v$ is uniquely determined by the coloring we made in the previous step.
If $\hyp$ is good we thus color all of its vertices.
If any step of the algorithm fails, it just uses exhaustive search to find a legal $2$-coloring.
We remind the reader that $N(u, A)$ denotes the set of $(k-1)$-tuples of vertices in $A$ which form an edge with $u$.

\input{fig1}

\begin{algorithm}[H]
\caption{Deterministic Coloring Algorithm for Bipartite $k$-Graphs}
    \textbf{Input:} A bipartite $k$-graph $\mathcal{H}$.\\
    \textbf{Output:} A proper $2$-coloring of $\mathcal{H}$.\\

    \textbf{Procedure:} \textsc{Color-Bipartite-Hypergraph($\mathcal{H}$)}\\
    1. \quad Search for a copy of $K_{\ell,\ell}$ using exhaustive search.\\
    2. \quad \textbf{If} no copy of $K_{\ell,\ell}$ was found, proceed to Step 13.\\
    3. \quad Let $A,B$ be the independent sets of the copy of $K_{\ell,\ell}$ found in Step $1$. \\
    4. \quad Color the vertices in $A$ with color $0$ and those in $B$ with color $1$.\\
    5. \quad $\forall u\in V(H)$ \textbf{do}\\
    6. \qquad \textbf{If} $N(u, B)\neq \emptyset$, then color $u$ with color $0$.\\
    7. \qquad \textbf{If} $N(u, A)\neq \emptyset$, then color $u$ with color $1$.\\
    8. \quad Let $C_0$ (resp. $C_1$) be the vertices colored $0$ (resp. $1$) thus far.\\
    9. \quad $\forall u\in V(H)\setminus \{C_0 \cup C_1\}$ \textbf{do}\\
    10. \qquad \textbf{If} $N(v,C_1)\neq \emptyset$, then color $u$ with color $0$.\\
    11. \qquad \textbf{If} $N(v,C_0)\neq \emptyset$, then color $u$ with color $1$.\\
    12.\quad \textbf{If} all vertices were colored then return this $2$-coloring, \textbf{Else} go to Step 13.\\
    13.\quad Exhaustively search for a legal $2$-coloring of $\mathcal{H}$. Return the first one found.
\end{algorithm}

We begin by demonstrating that \textsc{Color-Bipartite-Hypergraph} indeed produces a proper $2$-coloring.

\begin{lemma}\label{proper coloring algorithm} Suppose $\mathcal{H}$ is a bipartite $k$-graph. Then the algorithm \textsc{Color-Bipartite-Hypergraph} returns a proper $2$-coloring of $\mathcal{H}$.
\end{lemma}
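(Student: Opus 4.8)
\section*{Proof proposal}

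The plan is to split into two cases according to how \textsc{Color-Bipartite-Hypergraph} terminates. First note that the algorithm always returns some coloring: every loop ranges over finitely many vertices, and the exhaustive searches in Steps~1 and~13 range over finitely many candidate copies of $K_{\ell,\ell}$, respectively finitely many $2$-colorings of $\mathcal{H}$. If the algorithm ever reaches Step~13, then since $\mathcal{H}$ is bipartite a legal $2$-coloring exists, so Step~13 finds and returns one, which is proper by construction. Hence the only case that requires work is the one in which a copy $K$ of $K_{\ell,\ell}$ is found in Step~1 and all vertices are colored by Step~12; write $\chi$ for the returned coloring and $A,B$ for the two parts of $K$.

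Since $\ell=5k\ge 2k-3$, Claim~\ref{uniquecoloirng} tells us that $K_{\ell,\ell}$ has a unique $2$-coloring, and therefore the restriction to $A\cup B$ of any proper $2$-coloring of $\mathcal{H}$ --- which is automatically a proper $2$-coloring of $K$, as the edges of $K$ are edges of $\mathcal{H}$ --- colors one of $A,B$ entirely with $0$ and the other entirely with $1$. Since $\mathcal{H}$ is bipartite, we may therefore fix a proper $2$-coloring of $\mathcal{H}$ and, swapping its two colors if needed, obtain a proper $2$-coloring $c\colon V(\mathcal{H})\to\{0,1\}$ with $c\equiv 0$ on $A$ and $c\equiv 1$ on $B$. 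I claim that $\chi(u)=c(u)$ for \emph{every} vertex $u$; since in our case every vertex is colored by Step~12, this yields $\chi=c$, which is proper, and we are done. The whole argument runs on the following elementary forcing principle: if $S\subseteq V(\mathcal{H})$ is monochromatic under $c$ in color $\varepsilon$ and $N(u,S)\ne\emptyset$, then $u$ together with $k-1$ vertices of $S$ forms an edge of $\mathcal{H}$, so by properness of $c$ we cannot have $u\in S$, nor $u\notin S$ with $c(u)=\varepsilon$; hence $c(u)=1-\varepsilon$. In particular no edge of $\mathcal{H}$ lies inside a $c$-monochromatic set.

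With this in hand I would trace the coloring phases against $c$. Steps~3--4 color $A$ with $0$ and $B$ with $1$, matching $c$. In Steps~5--7, the forcing principle applied with $S=A$ and $S=B$ shows that the tests ``$N(a,A)\ne\emptyset$'' for $a\in A$ and ``$N(b,B)\ne\emptyset$'' for $b\in B$ all fail (so these vertices keep their Step~4 colors), and that for a vertex $u\notin A\cup B$ the tests ``$N(u,A)\ne\emptyset$'' and ``$N(u,B)\ne\emptyset$'' cannot both succeed, since they would force $c(u)=1$ and $c(u)=0$ respectively; hence every vertex that Steps~6--7 color receives exactly its $c$-color --- here one uses that Step~7 runs after Step~6, which is the correct tie-break because ``$N(u,A)\ne\emptyset$'' forces color $1$. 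Consequently, after Step~8 we have $C_0\subseteq c^{-1}(0)$, $C_1\subseteq c^{-1}(1)$ and $A\cup B\subseteq C_0\cup C_1$, so $C_0$ and $C_1$ are themselves $c$-monochromatic. Applying the forcing principle once more with $S=C_0$ and $S=C_1$, the same reasoning shows that every vertex $u\notin C_0\cup C_1$ colored in Steps~10--11 receives color $c(u)$ (again Step~11 after Step~10 is the right order), and --- since in our case all vertices end up colored --- every vertex outside $C_0\cup C_1$ is indeed colored in this phase. Therefore $\chi=c$, completing the proof.

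The part requiring the most care is not conceptual but bookkeeping: one must respect the ``last write wins'' semantics of the conditional-assignment pairs in Steps~6--7 and~10--11, verify that vertices of $A\cup B$ are never recolored to the wrong value, and exploit the mere existence of the normalized coloring $c$ to guarantee that no vertex is ever forced to both colors, which is exactly what makes each assignment unambiguous.
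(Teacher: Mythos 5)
Your proof is correct and follows the same strategy as the paper's: dispatch the Step~13 case, then fix a normalized proper $2$-coloring $c$ (matching on $A,B$ via Claim~\ref{uniquecoloirng}), and show each conditional assignment in Steps~6--7 and~10--11 is forced to agree with $c$. You merely spell out the forcing argument in more detail (and add a tie-break remark that is in fact moot, since the two tests in each pair can never both succeed), whereas the paper asserts the agreement with $c$ directly.
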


\begin{proof}
If a copy of $K_{\ell,\ell}$ is not found, then we run Step 13 which finds a legal $2$-coloring (since one exists) so
the claim holds in this case. Suppose then that a copy of $K_{\ell,\ell}$ was found, and fix a legal $2$-coloring $c:\vex \mapsto \{0,1\}$ of the vertices of $\mathcal{H}$.
Recall that by Claim \ref{uniquecoloirng} $K_{\ell,\ell}$ has unique $2$-coloring. Hence, $c$ assigns all vertices of $A$ (resp. $B$) the same color. Assume without loss of generality that these are colors $0$ and $1$ as in our coloring.
Clearly, $c$ colors the vertices colored in Steps 6/7 in the same color as the algorithm does.
Similarly, $c$ colors the vertices colored in Steps 10/11 in the same color as the algorithm does.
Hence, if we colored all of the vertices of $\mathcal{H}$ then our coloring agrees with $c$ which is a legal $2$-coloring of $\mathcal{H}$ (which means that $c$ is the unique $2$-coloring of $\mathcal{H}$).
If the algorithm did not color all the vertices then it again resorts to exhaustively looking for a legal $2$-coloring.
\end{proof}

For a given $2$-coloring algorithm $A$ and input $\mathcal{H}$, let $T_A(\mathcal{H})$ denote the running time of $A$ on input $\mathcal{H}$.
Let $T_A(n)$ be the average of $T_A(\mathcal{H})$ over all $n$-vertex bipartite $k$-graphs $\mathcal{H}$, that is, the average case running time of $A$ over the $n$-vertex bipartite $k$-graphs.

\begin{lemma}\label{algorithm time complexity lemma}
The algorithm \textsc{Color-Bipartite-Hypergraph} satisfies
$$
T_\textsc{Color-Bipartite-Hypergraph}(n) = n^{O(k)}\;.
$$
\end{lemma}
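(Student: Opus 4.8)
The plan is to bound the average running time by splitting the set of $n$-vertex bipartite $k$-graphs into the good ones and the not-good ones, using Lemma \ref{good hyper graph probability over all bipartite hypergraphs} to control the contribution of the latter. For a good $k$-graph $\mathcal{H}$, I will argue that \textsc{Color-Bipartite-Hypergraph} terminates by Step 12 and runs in time $n^{O(k)}$. For a not-good $k$-graph, the algorithm may need to reach Step 13 and do an exhaustive search, which takes time at most $2^n \cdot n^{O(k)}$ (there are $2^n$ candidate $2$-colorings, and checking each against all $\binom{n}{k}$ potential edges costs $n^{O(k)}$). Since, by Lemma \ref{good hyper graph probability over all bipartite hypergraphs}, the fraction of not-good graphs in $\T^{(k)}_n$ is at most $2^{-2n}$, their total contribution to the average is at most $2^{-2n} \cdot 2^n \cdot n^{O(k)} = o(1)$, which is absorbed into $n^{O(k)}$.

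For the good case, I would walk through the algorithm's steps. Step 1 (exhaustive search for a copy of $K_{\ell,\ell}$) examines all $O(n^{2\ell})$ choices of two $\ell$-subsets, and for each checks whether all the required edges are present; since $\ell = 5k$ is a constant, this is $n^{O(k)}$ time, and a good $\mathcal{H}$ has at least $n^{2\ell}/2^{2^{10k}} > 0$ copies (by property (i)), so a copy is found. Steps 5--7 loop over all $n$ vertices $u$ and, for each, check whether $N(u,A) \ne \emptyset$ or $N(u,B) \ne \emptyset$ by inspecting the $O(\ell^{k-1})$ many $(k-1)$-tuples inside the constant-sized sets $A$ and $B$; this is $n \cdot n^{O(k)} = n^{O(k)}$. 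Steps 9--11 loop over all uncolored vertices $u$ and check whether $u$ has an edge with $k-1$ already-colored vertices; naively this scans $O(n^{k-1})$ tuples per vertex, so $n^{O(k)}$ total. The key point is that in a good $\mathcal{H}$, every vertex gets colored: by property (ii), for the copy $K$ of $K_{\ell,\ell}$ found in Step 1, every vertex $u$ satisfies $|N(u,N_A)| \geq n^{k-1}/k^{4k} > 0$ or $|N(u,N_B)| \geq n^{k-1}/k^{4k} > 0$, where $N_A, N_B$ are exactly the vertices that get colored $1$ (resp.\ $0$) in Steps 6--7 — so $u$ has an edge into $C_1$ or into $C_0$ and is colored in Steps 10--11. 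Hence Step 12 succeeds, Step 13 is not reached, and the total running time is $n^{O(k)}$.

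Putting the two cases together,
\[
T_\textsc{Color-Bipartite-Hypergraph}(n) \leq n^{O(k)} + 2^{-2n}\cdot 2^n \cdot n^{O(k)} = n^{O(k)}.
\]
I expect the main (though still routine) obstacle to be the bookkeeping in the good case: one must make sure that the sets $N_A, N_B$ from property (ii) really coincide with the sets $C_1, C_0$ of vertices colored in Steps 6--7 (up to the vertices of $A\cup B$ themselves, which are already colored), so that property (ii) indeed forces every remaining vertex to acquire a color in Steps 10--11. A minor subtlety is the possibility that Steps 6/7 (or 10/11) try to assign conflicting colors to the same vertex; but since $\mathcal{H}$ is bipartite this cannot happen for a vertex that the algorithm reaches — one should note, as in the proof of Lemma \ref{proper coloring algorithm}, that any legal $2$-coloring $c$ of $\mathcal{H}$ must agree with all forced assignments, so no conflict arises (and if somehow a conflict did arise we would simply fall through to Step 13, which only helps the bound in the not-good case). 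With these points checked, the running time bound $n^{O(k)}$ is immediate.
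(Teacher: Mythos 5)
Your proposal is correct and follows essentially the same route as the paper's proof: bound each step's cost, observe that a good $k$-graph never reaches Step 13 (property (i) yields the $K_{\ell,\ell}$, property (ii) forces every remaining vertex to be colored in Steps 10--11), and absorb the exhaustive-search cost of the not-good graphs via the $2^{-2n}$ bound from Lemma \ref{good hyper graph probability over all bipartite hypergraphs}. The bookkeeping you flag (that $N_A,N_B$ coincide with the sets colored in Steps 6--7) is exactly the identification the paper makes, so no further work is needed.
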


\begin{proof}
    We begin by analyzing the running time of each step. First, for Step 1, a copy of $K_{\ell,\ell}$ can be found (if it exists) by enumerating over all size $2\ell$ subsets of the $n$ vertices. Since $\ell$ is $O(k)$, a copy of $K_{\ell,\ell}$ can be found in time $n^{O(k)}$.
    In Steps 6/7, for each vertex $u \in V(\mathcal{H})$, we check whether $N(u, A)\neq \emptyset$ or $N(u, B)\neq \emptyset$. This requires $2\binom{\ell}{k-1}=O(1)$ queries per vertex, leading to a total running time of $O(n)$.
    In steps 10-11, for each vertex $u \in V(\mathcal{H})$, we check whether $N(v,C_1)\neq \emptyset$ or $N(v,C_2)\neq \emptyset$. This requires $2\binom{n}{k-1} = O(n^{k-1})$ queries per vertex, resulting in total running time of $O(n^k)$.
    If Step 13 is reached then the running time is $O(\poly(n)\cdot 2^n)$.

    The crucial observation now is that if \(\hyp\) is good then the algorithm will find a legal $2$-coloring of \(\hyp\) without resorting to the exhaustive search in Step 13. Indeed, the algorithm will find a copy of $K_{\ell,\ell}$ (by property $(i)$ of good $k$-graphs). It will then color in Steps 6/7 some vertices with color $0$ and some with color $1$ (the sets $C_0,C_1$ in the algorithm which correspond to the sets $N_A,N_B$ in the definition of good $k$-graphs). It will then color all the remaining vertices in Steps 10-11 (by Property $(ii)$ of good $k$-graphs). Hence, if \(\hyp\) is good then the algorithm runs in
    time $n^{O(k)}$. By Lemma \ref{good hyper graph probability over all bipartite hypergraphs}, the probability that $\mathcal{H} \sim \T^{(k)}_{n}$ is not good is at most $2^{-2n}$. Hence, we conclude that the average running time of the algorithm over all bipartite $k$-graphs satisfies
    \[
        T_\textsc{Color-Bipartite-Hypergraph}(n) = n^{O(k)} + \poly(n)\cdot 2^n\cdot 2^{-2n}  = n^{O(k)}. \qedhere\]
\end{proof}

\begin{proof}[Proof of Theorem \ref{algorithm theorem}]
By Lemma \ref{proper coloring algorithm} \textsc{Color-Bipartite-Hypergraph} always returns a legal $2$-coloring, and by
Lemma \ref{algorithm time complexity lemma} its average running time is $n^{O(k)}$. It is also clear that the algorithm does not use
any memory to keep information between successive calls to it.
\end{proof}

\section{A Coloring Oracle for Bipartite Hypergraphs}\label{oracle section}
In this section we prove Theorems \ref{main theorem} and \ref{randomized algorithm}.
Throughout this section we use $\ell=5k$ as in Definition \ref{definition:good-hypergraph}.
To this end, we will modify the algorithm \textsc{Color-Bipartite-Hypergraph} presented in the previous section in order to turn it into a coloring oracle and subsequently into a randomized algorithm with linear expected running time.
Let us describe the key ideas needed to make the running time $O(1)$ on average and how to avoid using any memory between successive calls
and still maintain a consistent coloring. Observe that in order to get running time $O(1)$ on average it is enough to obtain
this running time for good hypergraphs. For such inputs, we can in fact find a copy of $K_{\ell,\ell}$ in $O(1)$ time
since a positive proportion of all $2\ell$-tuples contain a $K_{\ell,\ell}$ (see Step 2 of Algorithm 2 below). It is easy to see that if $\mathcal{H}$ is good then a coloring of a copy of $K_{\ell,\ell}$ forces a coloring of every vertex $u$ in $\mathcal{H}$. In fact, there are many
$(k-1)$-tuples $v_1,\ldots,v_{k-1}$ which witness this fact, so the color of $u$ can be deduced in $O(1)$ time (as in Steps 4-5).
The challenge is that the Oracle should answer consistently for {\em every} input, hence we need a mechanism
for solving this issue. This is achieved by Step 2, in which we not only try to find a copy of $K_{\ell,\ell}$ but we also demand
that this copy forces a color for vertex $1$. We always assume that vertex $1$ is colored $0$ (see below why), so in this sense vertex
$1$ forces the colors of $A$ and $B$. The colors of $A$ and $B$ then force the colors of other vertices $u$ (but {\em not} necessarily all of them) in Steps 4-5. If we think of a $2$-coloring of $\mathcal{H}$ as a $0/1$ string of length $n$, then the color of vertex $1$ is the most significant bit, hence the lexicographically first legal $2$-coloring of $\mathcal{H}$ must assign vertex $1$ the color $0$ (since flipping the colors of a legal coloring is also a legal coloring). This is why it is convenient to also look for a coloring giving $1$ the color $0$.

\begin{algorithm}
\caption{Coloring Oracle for Bipartite $k$-Graphs}
    \textbf{Input:} Vertex $u$ in some bipartite $k$-graph $\mathcal{H}$, and oracle access to $E(\mathcal{H})$. \\
    \textbf{Output:} Color assignment to $u$.\\

    1.\quad\textbf {Procedure:} \textsc{Coloring-Oracle($\mathcal{H}$,$u$)}\\
    2.\quad Uniformly sample $(2\ell+k-1)$-tuples $(x_1,\ldots,x_{2\ell},y_1,\ldots,y_{k-1})$ of vertices until $(i)$ and $(ii)$ hold:\\
    $~~~~~~~$ $(i)$ Vertices $x_1,\ldots,x_{2\ell}$ span a copy of $K_{\ell,\ell}$. Set $A,B$ to be the independent sets of this copy.\\
    $~~~~~~~$ $(ii)$ Vertices $y_1,\ldots,y_{k-1}$ satisfy one of the following:\\
    $~~~~~~~~$ $(ii.1):$ $(1,y_1,\dots,y_{k-1})\in E(\mathcal{H})$ and $N(y_i,A)\neq \emptyset$ for every $i\in [k-1]$. Set $c_A=0,c_B=1$\\
    $~~~~~~~~$ $(ii.2):$ $(1,y_1,\dots,y_{k-1})\in E(\mathcal{H})$ and $N(y_i,B)\neq \emptyset$ for every $i\in [k-1]$. Set $c_A=1,c_B=0$\\
    $~~~~~~~$ \textbf{If} all $(2\ell+k-1)$-tuples were inspected, and none of them satisfies $(i),(ii)$, goto Step 7.\\
    3.\quad Uniformly sample $(k-1)$-tuples of vertices $v_1,\dots,v_{k-1}$:\\
    4.\qquad \textbf{If} $(u,v_1,\dots,v_{k-1})\in E(\mathcal{H})$ and $N(v_i,A)\neq \emptyset$ for every $i\in [k-1]$ then return $c_A$.\\
    5.\qquad \textbf{If} $(u,v_1,\dots,v_{k-1})\in E(\mathcal{H})$ and $N(v_i,B)\neq \emptyset$ for every $i\in [k-1]$ then return $c_B$.\\
    6.\qquad \textbf{If} all $(k-1)$-tuples have been inspected, and none of them satisfies $4-5$, then goto Step 7.\\
    7.\quad Exhaustively search for the lexicographically first legal $2$-coloring $c$ of $\mathcal{H}$. Return $c(v)$.
\end{algorithm}

\begin{lemma}\label{oracle corectness}
Algorithm \textsc{Coloring-Oracle} is a $2$-coloring oracle. That is,
if $\mathcal{H}$ is a bipartite $k$-graph, then for every sequence of queries $u_1,u_2,\ldots$ the oracle's answers are consistent
with the same legal $2$-coloring of $\mathcal{H}$.
\end{lemma}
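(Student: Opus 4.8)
The plan is to show that Algorithm \textsc{Coloring-Oracle} always answers according to one fixed legal $2$-coloring of $\mathcal H$, namely the lexicographically first one, which we call $c^\star$. Recall that flipping all colors of a legal $2$-coloring yields another legal $2$-coloring, so $c^\star$ assigns vertex $1$ the color $0$. The key point is that the answer the oracle gives on a query $u$ is \emph{deterministic} given $\mathcal H$ — even though the algorithm uses randomness to \emph{find} a witness, every witness it could possibly find forces the same color for $u$. So it suffices to argue: (a) whenever the algorithm reaches Step 7, it returns $c^\star(u)$ by definition; (b) whenever it returns in Step 4 or 5, the returned value equals $c^\star(u)$; and (c) whenever it returns in Step 2 having set $c_A,c_B$ but then falls through to Step 7 at Step 6, consistency still holds (this is subsumed by (a)).

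For part (b), first analyze what Step 2 guarantees. If the algorithm passes Step 2, it has found vertices $x_1,\dots,x_{2\ell}$ spanning a copy $K$ of $K_{\ell,\ell}$ with independent sets $A,B$; by Claim \ref{uniquecoloirng} (since $\ell = 5k \geq 2k-3$), $K$ has a unique $2$-coloring, so in \emph{every} legal $2$-coloring of $\mathcal H$, all of $A$ gets one color and all of $B$ the other. In case $(ii.1)$, the edge $(1,y_1,\dots,y_{k-1})$ together with $N(y_i,A)\neq\emptyset$ forces the color of vertex $1$ to determine the colors of $A,B$: concretely, the $y_i$ each have a neighbor tuple inside $A$, so if all of $A$ had the same color as vertex $1$, that edge $\{y_i\}\cup(\text{tuple in }A)$ — wait, more carefully: since $A$ is monochromatic, say color $\gamma$, each $y_i$ forms an edge with $k-1$ vertices of $A$ all colored $\gamma$, so $y_i$ must be colored $1-\gamma$; then $(1,y_1,\dots,y_{k-1})$ being an edge forces vertex $1$ to be colored $\gamma$ (all the $y_i$ are $1-\gamma$). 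Since $c^\star$ colors vertex $1$ with $0$, we get $\gamma = 0$, i.e. $c^\star$ colors $A$ with $0$ and $B$ with $1$, which matches the assignment $c_A=0, c_B=1$ in $(ii.1)$; symmetrically for $(ii.2)$. Hence $c_A = c^\star(A)$ and $c_B = c^\star(B)$ regardless of which witness tuple was sampled. Then for Step 4: if $(u,v_1,\dots,v_{k-1})$ is an edge and each $v_i$ has a neighbor tuple in $A$, the same argument (each $v_i$ is forced to the color $1-c_A$, so $u$ is forced to $c_A$) shows $c^\star(u) = c_A$; symmetrically Step 5 returns $c_B = c^\star(u)$. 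So every early return agrees with $c^\star$.

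The main thing to be careful about — the only real subtlety — is that the oracle is randomized and, \emph{a priori}, could fail to find a witness on one call to query $u$ and succeed on another call to the same $u$, or find different witnesses forcing (we must check) the same color. The argument above resolves this: the \emph{color} returned is a function of $\mathcal H$ and $u$ alone, independent of the random tuples sampled, because (i) Claim \ref{uniquecoloirng} pins down the bipartition of any $K_{\ell,\ell}$ copy, (ii) fixing $c^\star(1)=0$ pins down which side is $c_A$ and which is $c_B$, and (iii) the forcing arguments show every valid witness yields $c^\star(u)$. And if no witness is found the algorithm deterministically outputs $c^\star(u)$ via exhaustive search in Step 7. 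Therefore, for any sequence of queries $u_1,u_2,\dots$, possibly with repeats and across independent invocations with fresh randomness, every answer equals $c^\star(u_j)$, so all answers are consistent with the single legal $2$-coloring $c^\star$. This establishes that \textsc{Coloring-Oracle} is a $2$-coloring oracle in the sense of Definition \ref{deforacle}, with no memory needed between calls since $c^\star$ is defined purely in terms of $\mathcal H$.
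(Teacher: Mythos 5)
Your proposal is correct and follows essentially the same route as the paper's proof: fix the lexicographically first legal $2$-coloring $c^\star$ (which must assign vertex $1$ the color $0$), use Claim \ref{uniquecoloirng} to pin down the bipartition of any $K_{\ell,\ell}$ found, and observe that the forcing arguments in Steps $(ii)$ and 4--5 guarantee every witness yields exactly $c^\star(u)$, while Step 7 returns $c^\star(u)$ by definition. The extra framing you add — that the oracle's answer is a deterministic function of $\mathcal H$ and $u$ despite the internal randomization — is a nice way to make the consistency-across-calls conclusion explicit, but it is the same argument the paper makes implicitly by showing every early return agrees with $c$.
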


\begin{proof} Let $c$ be the lexicographically first legal $2$-coloring of $\mathcal{H}$. Note that such a coloring must
assign vertex $1$ the color $0$. We claim that for any sequence of queries,
the oracle's answers are always consistent with $c$. This is certainly the case if when applied to vertex $u$, the algorithm ever resorts to Step 7, so suppose it does not. In this case we know that the algorithm found a copy of $K_{\ell,\ell}$, a $(k-1)$-tuple of vertices satisfying
conditions $(ii.1)$ or $(ii.2)$ and a $(k-1)$-tuple satisfying either Step 4 or 5. Suppose wlog that steps $(ii.1)$ and 4 are the ones that were satisfied (the other 3 options are identical). First recall that by Claim \ref{uniquecoloirng} every legal $2$-coloring of $\mathcal{H}$ gives
the vertices of $A$ the same color and to those in $B$ the other color. If $(ii.1)$ holds then each of the vertices $y_1,\ldots,y_{k-1}$ forms an edge with a $(k-1)$-tuple in $A$ implying that in any legal $2$-coloring of $\mathcal{H}$ the vertices $y_1,\ldots,y_{k-1}$ are colored as $B$. Since $(ii.1)$ further assumes that $(1,y_1,\dots,y_{k-1}) \in E(\mathcal{H})$ we conclude that in any legal $2$-coloring of $\mathcal{H}$ the set $A$ is assigned the same color as vertex $1$. In particular, this means that $c$ assigns $A$ the color
$0$ and $B$ the color $1$. Hence we set $c_A=0,c_B=1$ to indicate this fact. By an identical argument, if Step 4 holds then in any legal $2$-coloring of $E(\mathcal{H})$, and in particular in $c$, vertex $u$ receives the same color as $A$. Hence returning color $c_A$ for $u$ is consistent with $c$.
\end{proof}

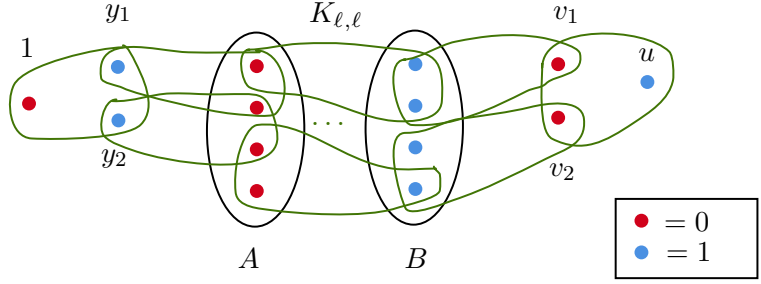
\begin{figure}\label{fig:fig-2}
    \centering

\tikzset{every picture/.style={line width=0.75pt}} %set default line width to 0.75pt

\begin{tikzpicture}[x=0.75pt,y=0.75pt,yscale=-1,xscale=1]
%uncomment if require: \path (0,232); %set diagram left start at 0, and has height of 232

%Shape: Ellipse [id:dp921589427345056]
\draw   (323.87,57.04) .. controls (337.43,57.17) and (348.22,79.19) .. (347.97,106.25) .. controls (347.73,133.3) and (336.54,155.13) .. (322.98,155) .. controls (309.42,154.88) and (298.63,132.85) .. (298.87,105.8) .. controls (299.12,78.75) and (310.31,56.92) .. (323.87,57.04) -- cycle ;
%Shape: Circle [id:dp9313607415017677]
\draw  [color={rgb, 255:red, 208; green, 2; blue, 27 }  ,draw opacity=1 ][fill={rgb, 255:red, 208; green, 2; blue, 27 }  ,fill opacity=1 ] (321,74) .. controls (321,72.34) and (322.34,71) .. (324,71) .. controls (325.66,71) and (327,72.34) .. (327,74) .. controls (327,75.66) and (325.66,77) .. (324,77) .. controls (322.34,77) and (321,75.66) .. (321,74) -- cycle ;
%Shape: Circle [id:dp4630097686443595]
\draw  [color={rgb, 255:red, 208; green, 2; blue, 27 }  ,draw opacity=1 ][fill={rgb, 255:red, 208; green, 2; blue, 27 }  ,fill opacity=1 ] (321,95) .. controls (321,93.34) and (322.34,92) .. (324,92) .. controls (325.66,92) and (327,93.34) .. (327,95) .. controls (327,96.66) and (325.66,98) .. (324,98) .. controls (322.34,98) and (321,96.66) .. (321,95) -- cycle ;
%Shape: Circle [id:dp43903993468621416]
\draw  [color={rgb, 255:red, 208; green, 2; blue, 27 }  ,draw opacity=1 ][fill={rgb, 255:red, 208; green, 2; blue, 27 }  ,fill opacity=1 ] (321,116) .. controls (321,114.34) and (322.34,113) .. (324,113) .. controls (325.66,113) and (327,114.34) .. (327,116) .. controls (327,117.66) and (325.66,119) .. (324,119) .. controls (322.34,119) and (321,117.66) .. (321,116) -- cycle ;
%Shape: Circle [id:dp0849990933814303]
\draw  [color={rgb, 255:red, 208; green, 2; blue, 27 }  ,draw opacity=1 ][fill={rgb, 255:red, 208; green, 2; blue, 27 }  ,fill opacity=1 ] (321,137) .. controls (321,135.34) and (322.34,134) .. (324,134) .. controls (325.66,134) and (327,135.34) .. (327,137) .. controls (327,138.66) and (325.66,140) .. (324,140) .. controls (322.34,140) and (321,138.66) .. (321,137) -- cycle ;
%Shape: Ellipse [id:dp4233650014648389]
\draw   (403.87,56.04) .. controls (417.43,56.17) and (428.22,78.19) .. (427.97,105.25) .. controls (427.73,132.3) and (416.54,154.13) .. (402.98,154) .. controls (389.42,153.88) and (378.63,131.85) .. (378.87,104.8) .. controls (379.12,77.75) and (390.31,55.92) .. (403.87,56.04) -- cycle ;
%Shape: Circle [id:dp168583431534574]
\draw  [color={rgb, 255:red, 74; green, 144; blue, 226 }  ,draw opacity=1 ][fill={rgb, 255:red, 74; green, 144; blue, 226 }  ,fill opacity=1 ] (401,73) .. controls (401,71.34) and (402.34,70) .. (404,70) .. controls (405.66,70) and (407,71.34) .. (407,73) .. controls (407,74.66) and (405.66,76) .. (404,76) .. controls (402.34,76) and (401,74.66) .. (401,73) -- cycle ;
%Shape: Circle [id:dp8890981249608267]
\draw  [color={rgb, 255:red, 74; green, 144; blue, 226 }  ,draw opacity=1 ][fill={rgb, 255:red, 74; green, 144; blue, 226 }  ,fill opacity=1 ] (401,94) .. controls (401,92.34) and (402.34,91) .. (404,91) .. controls (405.66,91) and (407,92.34) .. (407,94) .. controls (407,95.66) and (405.66,97) .. (404,97) .. controls (402.34,97) and (401,95.66) .. (401,94) -- cycle ;
%Shape: Circle [id:dp2886913424962255]
\draw  [color={rgb, 255:red, 74; green, 144; blue, 226 }  ,draw opacity=1 ][fill={rgb, 255:red, 74; green, 144; blue, 226 }  ,fill opacity=1 ] (401,115) .. controls (401,113.34) and (402.34,112) .. (404,112) .. controls (405.66,112) and (407,113.34) .. (407,115) .. controls (407,116.66) and (405.66,118) .. (404,118) .. controls (402.34,118) and (401,116.66) .. (401,115) -- cycle ;
%Shape: Circle [id:dp28600643728646435]
\draw  [color={rgb, 255:red, 74; green, 144; blue, 226 }  ,draw opacity=1 ][fill={rgb, 255:red, 74; green, 144; blue, 226 }  ,fill opacity=1 ] (401,136) .. controls (401,134.34) and (402.34,133) .. (404,133) .. controls (405.66,133) and (407,134.34) .. (407,136) .. controls (407,137.66) and (405.66,139) .. (404,139) .. controls (402.34,139) and (401,137.66) .. (401,136) -- cycle ;
%Shape: Free Drawing [id:dp9808548669267856]
\draw  [color={rgb, 255:red, 65; green, 117; blue, 5 }  ,draw opacity=1 ][line width=0.75] [line join = round][line cap = round] (319.33,65.67) .. controls (328.16,65.67) and (336.49,63.09) .. (345.33,62.67) .. controls (366.14,61.68) and (372.53,62.91) .. (388.33,64.67) .. controls (410.24,67.1) and (419.09,65.3) .. (417.33,91.67) .. controls (416.68,101.54) and (402.04,102.34) .. (393.33,101.67) .. controls (377.17,100.42) and (361.98,90.9) .. (346.33,88.67) .. controls (337.88,87.46) and (326.95,87.98) .. (320.33,84.67) .. controls (316.24,82.62) and (313.28,65.67) .. (320.33,65.67) ;
%Shape: Free Drawing [id:dp028728211975422635]
\draw  [color={rgb, 255:red, 65; green, 117; blue, 5 }  ,draw opacity=1 ][line width=0.75] [line join = round][line cap = round] (414.33,125.67) .. controls (407,125.67) and (399.59,126.7) .. (392.33,125.67) .. controls (366.49,121.98) and (348.38,100.99) .. (324.33,103.67) .. controls (319.15,104.24) and (315.32,115.75) .. (314.33,120.67) .. controls (313.22,126.23) and (312.74,132.02) .. (313.33,137.67) .. controls (314.15,145.45) and (325.69,145.86) .. (331.33,146.67) .. controls (342.97,148.33) and (378.37,150.99) .. (394.33,145.67) .. controls (398.42,144.31) and (415.84,140.65) .. (416.33,136.67) .. controls (416.66,134.02) and (416.66,131.31) .. (416.33,128.67) .. controls (416.2,127.61) and (413.33,126.05) .. (413.33,124.67) ;
%Shape: Circle [id:dp14305494393322993]
\draw  [color={rgb, 255:red, 208; green, 2; blue, 27 }  ,draw opacity=1 ][fill={rgb, 255:red, 208; green, 2; blue, 27 }  ,fill opacity=1 ] (473,73) .. controls (473,71.34) and (474.34,70) .. (476,70) .. controls (477.66,70) and (479,71.34) .. (479,73) .. controls (479,74.66) and (477.66,76) .. (476,76) .. controls (474.34,76) and (473,74.66) .. (473,73) -- cycle ;
%Shape: Circle [id:dp9075404161431254]
\draw  [color={rgb, 255:red, 208; green, 2; blue, 27 }  ,draw opacity=1 ][fill={rgb, 255:red, 208; green, 2; blue, 27 }  ,fill opacity=1 ] (473,100) .. controls (473,98.34) and (474.34,97) .. (476,97) .. controls (477.66,97) and (479,98.34) .. (479,100) .. controls (479,101.66) and (477.66,103) .. (476,103) .. controls (474.34,103) and (473,101.66) .. (473,100) -- cycle ;
%Shape: Circle [id:dp8601206902986672]
\draw  [color={rgb, 255:red, 74; green, 144; blue, 226 }  ,draw opacity=1 ][fill={rgb, 255:red, 74; green, 144; blue, 226 }  ,fill opacity=1 ] (518,82) .. controls (518,80.34) and (519.34,79) .. (521,79) .. controls (522.66,79) and (524,80.34) .. (524,82) .. controls (524,83.66) and (522.66,85) .. (521,85) .. controls (519.34,85) and (518,83.66) .. (518,82) -- cycle ;
%Shape: Free Drawing [id:dp1971964264818744]
\draw  [color={rgb, 255:red, 65; green, 117; blue, 5 }  ,draw opacity=1 ][line width=0.75] [line join = round][line cap = round] (398,70.67) .. controls (405.41,70.67) and (415.93,63.28) .. (424,61.67) .. controls (438.19,58.83) and (451.43,56.91) .. (468,59.67) .. controls (470.87,60.15) and (486.19,63.62) .. (487,67.67) .. controls (489.33,79.34) and (474.57,78.09) .. (470,82.67) .. controls (467.49,85.18) and (463.99,84.67) .. (460,86.67) .. controls (443.58,94.88) and (423.5,105.1) .. (402,103.67) .. controls (395.26,103.22) and (386.92,70.67) .. (399,70.67) ;
%Shape: Free Drawing [id:dp9604841599531578]
\draw  [color={rgb, 255:red, 65; green, 117; blue, 5 }  ,draw opacity=1 ][line width=0.75] [line join = round][line cap = round] (399,107.67) .. controls (411.66,107.67) and (416.64,102) .. (426,100.67) .. controls (438.63,98.86) and (451.35,97.33) .. (462,94.67) .. controls (469.54,92.78) and (487.92,89.86) .. (489,100.67) .. controls (490.28,113.49) and (479.65,115.84) .. (470,120.67) .. controls (453.97,128.68) and (439.79,136.07) .. (423,141.67) .. controls (419.93,142.69) and (401.44,150.54) .. (399,145.67) .. controls (395.83,139.33) and (387.23,107.67) .. (399,107.67) -- cycle ;
%Shape: Free Drawing [id:dp3001199417303144]
\draw  [color={rgb, 255:red, 65; green, 117; blue, 5 }  ,draw opacity=1 ][line width=0.75] [line join = round][line cap = round] (468,74.67) .. controls (468,52.36) and (519.55,54.21) .. (532,66.67) .. controls (534.32,68.99) and (534.28,78.97) .. (534,80.67) .. controls (532.28,91.01) and (521.85,97.43) .. (514,102.67) .. controls (499.73,112.18) and (488.48,118.41) .. (473,110.67) .. controls (465.46,106.9) and (468,79.97) .. (468,72.67) ;
%Shape: Circle [id:dp17673826433378204]
\draw  [color={rgb, 255:red, 74; green, 144; blue, 226 }  ,draw opacity=1 ][fill={rgb, 255:red, 74; green, 144; blue, 226 }  ,fill opacity=1 ] (257.22,101.35) .. controls (257.24,103.01) and (255.91,104.37) .. (254.25,104.38) .. controls (252.6,104.4) and (251.24,103.07) .. (251.22,101.42) .. controls (251.2,99.76) and (252.53,98.4) .. (254.19,98.38) .. controls (255.85,98.37) and (257.2,99.69) .. (257.22,101.35) -- cycle ;
%Shape: Circle [id:dp5258716366255797]
\draw  [color={rgb, 255:red, 74; green, 144; blue, 226 }  ,draw opacity=1 ][fill={rgb, 255:red, 74; green, 144; blue, 226 }  ,fill opacity=1 ] (256.93,74.35) .. controls (256.95,76.01) and (255.62,77.37) .. (253.96,77.38) .. controls (252.3,77.4) and (250.95,76.07) .. (250.93,74.42) .. controls (250.91,72.76) and (252.24,71.4) .. (253.9,71.39) .. controls (255.55,71.37) and (256.91,72.7) .. (256.93,74.35) -- cycle ;
%Shape: Circle [id:dp742155681532508]
\draw  [color={rgb, 255:red, 208; green, 2; blue, 27 }  ,draw opacity=1 ][fill={rgb, 255:red, 208; green, 2; blue, 27 }  ,fill opacity=1 ] (212.13,92.84) .. controls (212.14,94.5) and (210.82,95.85) .. (209.16,95.87) .. controls (207.5,95.89) and (206.14,94.56) .. (206.13,92.9) .. controls (206.11,91.25) and (207.44,89.89) .. (209.09,89.87) .. controls (210.75,89.85) and (212.11,91.18) .. (212.13,92.84) -- cycle ;
%Shape: Free Drawing [id:dp3560137056246111]
\draw  [color={rgb, 255:red, 65; green, 117; blue, 5 }  ,draw opacity=1 ][line width=0.75] [line join = round][line cap = round] (325.5,67) .. controls (313.5,67) and (301.5,67.32) .. (289.5,67) .. controls (273.26,66.56) and (242.06,55.94) .. (245.5,80) .. controls (245.88,82.67) and (253.56,83.53) .. (256.5,85) .. controls (263.82,88.66) and (301.03,97.13) .. (311.5,98) .. controls (314.44,98.24) and (331.11,100.39) .. (333.5,98) .. controls (342.85,88.65) and (336.95,66) .. (323.5,66) ;
%Shape: Free Drawing [id:dp09779657019675436]
\draw  [color={rgb, 255:red, 65; green, 117; blue, 5 }  ,draw opacity=1 ][line width=0.75] [line join = round][line cap = round] (256.5,63.5) .. controls (258.83,63.5) and (261.7,69.89) .. (262.5,71.5) .. controls (266.05,78.59) and (274.37,94.63) .. (265.5,103.5) .. controls (258.29,110.71) and (223.88,111.2) .. (210.5,110.5) .. controls (199.29,109.91) and (197.36,91.79) .. (201.5,83.5) .. controls (206.83,72.84) and (245.08,64.5) .. (256.5,64.5) ;
%Shape: Free Drawing [id:dp6771825337431632]
\draw  [color={rgb, 255:red, 65; green, 117; blue, 5 }  ,draw opacity=1 ][line width=0.75] [line join = round][line cap = round] (251.16,91.62) .. controls (261.7,92.19) and (270.73,87.98) .. (280.41,87.77) .. controls (292.19,87.52) and (293.01,88.35) .. (305.42,88.22) .. controls (326.04,88) and (328.41,86.33) .. (332.8,100.53) .. controls (333.55,102.95) and (334.96,108.66) .. (334.17,112.34) .. controls (330.21,130.85) and (278.37,122.54) .. (261.84,116.58) .. controls (256.48,114.65) and (250.72,113.01) .. (247.2,109.47) .. controls (243.37,105.63) and (246.58,90.47) .. (253.21,90.83) ;
%Shape: Circle [id:dp5774330008502156]
\draw  [color={rgb, 255:red, 208; green, 2; blue, 27 }  ,draw opacity=1 ][fill={rgb, 255:red, 208; green, 2; blue, 27 }  ,fill opacity=1 ] (515,152) .. controls (515,150.34) and (516.34,149) .. (518,149) .. controls (519.66,149) and (521,150.34) .. (521,152) .. controls (521,153.66) and (519.66,155) .. (518,155) .. controls (516.34,155) and (515,153.66) .. (515,152) -- cycle ;
%Shape: Circle [id:dp4829365843207314]
\draw  [color={rgb, 255:red, 74; green, 144; blue, 226 }  ,draw opacity=1 ][fill={rgb, 255:red, 74; green, 144; blue, 226 }  ,fill opacity=1 ] (515,168) .. controls (515,166.34) and (516.34,165) .. (518,165) .. controls (519.66,165) and (521,166.34) .. (521,168) .. controls (521,169.66) and (519.66,171) .. (518,171) .. controls (516.34,171) and (515,169.66) .. (515,168) -- cycle ;
%Shape: Rectangle [id:dp48581690535699984]
\draw   (505,141) -- (576.5,141) -- (576.5,181) -- (505,181) -- cycle ;

% Text Node
\draw (313,164.4) node [anchor=north west][inner sep=0.75pt]    {$A$};
% Text Node
\draw (397,164.4) node [anchor=north west][inner sep=0.75pt]    {$B$};
% Text Node
\draw (349,40.4) node [anchor=north west][inner sep=0.75pt]    {$K_{\ell }{}_{,}{}_{\ell }$};
% Text Node
\draw (349.97,100.65) node [anchor=north west][inner sep=0.75pt]  [color={rgb, 255:red, 65; green, 117; blue, 5 }  ,opacity=1 ]  {$\dotsc $};
% Text Node
\draw (472,42.4) node [anchor=north west][inner sep=0.75pt]    {$v_{1}$};
% Text Node
\draw (470,121.4) node [anchor=north west][inner sep=0.75pt]    {$v_{2}$};
% Text Node
\draw (515,63.4) node [anchor=north west][inner sep=0.75pt]    {$u$};
% Text Node
\draw (246,41.4) node [anchor=north west][inner sep=0.75pt]    {$y_{1}$};
% Text Node
\draw (244,114.4) node [anchor=north west][inner sep=0.75pt]    {$y_{2}$};
% Text Node
\draw (203,58.4) node [anchor=north west][inner sep=0.75pt]    {$1$};
% Text Node
\draw (528,146.4) node [anchor=north west][inner sep=0.75pt]    {$=0$};
% Text Node
\draw (529,161.4) node [anchor=north west][inner sep=0.75pt]    {$=1$};

\end{tikzpicture}
    \caption{Illustration of how vertices are colored in \textit{good} $3$-graphs (Definition \ref{definition:good-hypergraph}) by Algorithm 2. First, once a copy of $K_{\ell, \ell}$ is found, we must determine its coloring relative to how vertex 1 is colored. Therefore, a path of length up to two to vertex 1 is found, and $K_{\ell, \ell}$ is colored to be consistent with vertex 1 being colored 0 (red). All other vertices will be within a path length of two of this copy and are colored consistently relative to the $K_{\ell, \ell}$ copy.}
    \label{fig:enter-label}
\end{figure}

Recall that if $A$ is a coloring oracle, $\mathcal{H}$ is a bipartite $k$-graph and $u \in V(\mathcal{H})$, then we use
$T_A(\mathcal{H}, u)$ to denote the expected running time it takes $A$ to return a color for $u$. We further set
$T_A(\mathcal{H})=\max_{u\in V(\mathcal{H})} T_A(\mathcal{H},u)$, and $T_A(n)$ as the average of $T_A(\mathcal{H})$
over all bipartite $n$-vertex $k$-graphs $\mathcal{H}$.

\begin{lemma}\label{good hypergraph average case complexity}
The algorithm \textsc{Coloring-Oracle} satisfies
$$
T_\textsc{Coloring-Oracle}(n) = O(1)\;.
$$
\end{lemma}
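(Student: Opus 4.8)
The plan is to split the analysis according to whether the input hypergraph is good. For a good $\mathcal{H}$ we will show that the oracle answers \emph{every} query in $O(1)$ expected time, so that $T_{\textsc{Coloring-Oracle}}(\mathcal{H})=O(1)$; for an arbitrary $\mathcal{H}$ we will only use the trivial bound that any query is answered in $\poly(n)\cdot 2^n$ time (Step 7 is at worst an exhaustive search over the $2^n$ colorings, and Steps 2--3 inspect at most $n^{O(k)}$ tuples before that). Since by Lemma~\ref{good hyper graph probability over all bipartite hypergraphs} a random $\mathcal{H}\sim\T^{(k)}_n$ is not good with probability at most $2^{-2n}$, we would conclude
\[
T_{\textsc{Coloring-Oracle}}(n)\;\le\;1\cdot O(1)\;+\;2^{-2n}\cdot\poly(n)\cdot 2^n\;=\;O(1),
\]
so the whole argument reduces to the good case.

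Fix a good $\mathcal{H}$ and an arbitrary query vertex $u$. The key claim is that a uniformly random $(2\ell+k-1)$-tuple $(x_1,\dots,x_{2\ell},y_1,\dots,y_{k-1})$ satisfies conditions $(i)$ and $(ii)$ of Step 2 with probability $\Omega(1)$, a constant depending only on $k$. For condition $(i)$: property $(i)$ of good hypergraphs gives at least $n^{2\ell}/2^{2^{10k}}$ copies of $K_{\ell,\ell}$, and since each copy has a number of automorphisms depending only on $\ell=5k$, an $\Omega(1)$-fraction of the tuples $(x_1,\dots,x_{2\ell})$ span a copy, whose independent sets $A,B$ are then well defined by Claim~\ref{uniquecoloirng}. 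For condition $(ii)$, I would condition on the $x$'s (hence on the copy $K$ with classes $A,B$) and use that the $y$'s are sampled independently: property $(ii)$ of good hypergraphs applied to vertex $1$ and the copy $K$ yields $|N(1,N_A)|\ge n^{k-1}/k^{4k}$ or $|N(1,N_B)|\ge n^{k-1}/k^{4k}$, and in the first case each of the $(k-1)$-tuples counted by $N(1,N_A)$ forms an edge with $1$ and has all coordinates in $N_A$, i.e.\ satisfies $(ii.1)$ (and symmetrically $(ii.2)$ in the second case). Because property $(ii)$ holds for \emph{every} copy, this conditional probability is $\Omega(1)$ regardless of the outcome of the $x$'s. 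Multiplying, a single trial of Step 2 succeeds with probability $\Omega(1)$, so Step 2 terminates after $O(1)$ trials in expectation --- each trial costing $O(1)$ edge queries and time since $k,\ell$ are constants --- and never reaches the ``inspect all tuples'' branch.

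Next, with $A,B,c_A,c_B$ fixed, I would bound Steps 3--6 the same way: apply property $(ii)$ to the query vertex $u$ and the copy $K$ to get $|N(u,N_A)|\ge n^{k-1}/k^{4k}$ or $|N(u,N_B)|\ge n^{k-1}/k^{4k}$; in the former case each of those $(k-1)$-tuples $(v_1,\dots,v_{k-1})$ forms an edge with $u$ with all $N(v_i,A)\neq\emptyset$ and hence triggers Step 4, while in the latter case Step 5 is triggered. So a random $(k-1)$-tuple makes the loop return after $O(1)$ iterations in expectation (each taking $O(1)$ time), Step 7 is never reached, and $T_{\textsc{Coloring-Oracle}}(\mathcal{H},u)=O(1)$. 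As this holds for every $u$, $T_{\textsc{Coloring-Oracle}}(\mathcal{H})=\max_u T_{\textsc{Coloring-Oracle}}(\mathcal{H},u)=O(1)$ for good $\mathcal{H}$, completing the argument.

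The only genuinely content-bearing step is the probability estimate in the middle paragraph: verifying that conditions $(i)$ and $(ii)$ of Step 2 (and, for Steps 4--5, the analogous events) each hold for a constant fraction of samples. This is exactly where the two defining properties of good hypergraphs are used --- property $(i)$ to guarantee an abundance of $K_{\ell,\ell}$'s, and property $(ii)$, invoked once at vertex $1$ and once at the query vertex $u$, to guarantee an abundance of length-two ``forcing paths''. Everything else is routine bookkeeping about geometric random variables with $\Omega(1)$ success probability, together with the observation that the worst-case exponential running time on non-good inputs is killed by the $2^{-2n}$ failure bound from Lemma~\ref{good hyper graph probability over all bipartite hypergraphs}.
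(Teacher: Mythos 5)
Your proposal is correct and follows essentially the same route as the paper's proof: bound the good case by showing each trial of Step 2 (via property $(i)$ and property $(ii)$ applied to vertex $1$) and of Steps 3--5 (via property $(ii)$ applied to $u$) succeeds with probability $\Omega_k(1)$, then absorb the exponential cost on non-good inputs using the $2^{-2n}$ bound from Lemma~\ref{good hyper graph probability over all bipartite hypergraphs}. If anything, your write-up is slightly more explicit than the paper's about why condition $(ii)$ of Step 2 is exactly an instance of the goodness property at vertex $1$.
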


\begin{proof} We first claim that if $\hyp$ is a good bipartite $k$-graph then
\begin{displaymath}
        T_\textsc{Coloring-Oracle}(\mathcal{H}) = O(1)\;.
\end{displaymath}
Assuming $\hyp$ is a good bipartite $k$-graph, we need to prove that $T_\textsc{Coloring-Oracle}(\mathcal{H},u) = O(1)$ for every $u\in\vex$.
Item $(i)$ in Definition \ref{definition:good-hypergraph} guarantees that if $\hyp$ is good then in Step 2 the random $2\ell$-tuple $x_1,\ldots,x_{2\ell}$ contains a copy of $K_{\ell,\ell}$ with probability at least $2^{-2^{10k}}$. For each such copy let $C_B$ denote the vertices $v$ satisfying $N(v,A)\neq \emptyset$ and let $C_A$ denote the vertices $v$ satisfying
$N(v,B)\neq \emptyset$.
By item $(ii)$ in Definition \ref{definition:good-hypergraph} every vertex $u$ forms an edge with at least $n^{k-1}/k^{2k}$ of the $(k-1)$-tuples
in either $C_A$ or $C_B$. This means that if the $2\ell$ vertices $x_1,\ldots,x_{2\ell}$ form a copy of $K_{\ell,\ell}$, then the probability that
$y_1,\ldots,y_{k-1}$ satisfy $(ii.1)$ or $(ii.2)$ is at least $k^{-4k}$. Since we can think of picking a $(2\ell+k-1)$-tuple as first picking a $2\ell$-tuple and then a $(k-1)$-tuple we conclude that the probability that the $(2\ell+k-1)$-tuple satisfies $(i)$ and either
$(ii.1)$ or $(ii.2)$ is at least $2^{-2^{10k}}\cdot k^{-4k}$. Hence, the expected number of iterations until we find a $(2\ell+k-1)$-tuple satisfying these condition is $O(1)$. After we find such a $(2\ell+k-1)$-tuple, then by the same argument, a random $(k-1)$-tuple satisfies Steps 4 or 5 is at least $k^{-4k}$; therefore this step also takes $O(1)$ in expectation.

We see that if $\mathcal{H}$ is good then by the previous paragraph $T_\textsc{Coloring-Oracle}(\mathcal{H}) = O(1)$, and if $\mathcal{H}$ is not good then $T_\textsc{Coloring-Oracle}(\mathcal{H}) \leq 2^n$. By Lemma \ref{good hyper graph probability over all bipartite hypergraphs} only a $2^{-2n}$-fraction of the bipartite $k$-graphs are not good. Therefore
\[
        T_\textsc{Coloring-Oracle}(n) = O(1) + 2^n\cdot 2^{-2n} = O(1).\qedhere
\]
\end{proof}

\begin{proof}[Proof (of Theorem \ref{main theorem})]
By Lemma \ref{oracle corectness} the algorithm \textsc{Coloring-Oracle} is indeed a coloring oracle, and by Lemma
\ref{good hypergraph average case complexity} it satisfies $T_\textsc{Coloring-Oracle}(n) = O(1)$. Finally, note that the random string used on query $u$ is local and does not need to be maintained between queries, so the algorithm requires no persistent storage (and the responses remain consistent by Lemma \ref{oracle corectness}).
\end{proof}

\begin{proof}[Proof (of Theorem \ref{randomized algorithm})]
Suppose we color $\hyp$ by invoking \textsc{Coloring-Oracle} on all vertices of $\hyp$.
By linearity of expectation, for a given $\hyp$ the expected running time of the algorithm is at most $n\cdot T_\textsc{Coloring-Oracle}(\mathcal{H})$.
By Theorem \ref{main theorem}, this means that the average running time of the algorithm over $\mathcal{H} \sim \T^{(k)}_{n}$ is at most
\[
n \cdot T_\textsc{Coloring-Oracle}(n)= n \cdot O(1)=O(n)\;.  \qedhere
\]
\end{proof}

\begin{remark}
   The coloring oracle algorithm immediately yields an (average-case) LCA (see Definition \ref{def:avg-case-lca}). In Appendix \ref{LCA partition oracle section} we show how to obtain a slightly simpler average-case LCA by modifying the coloring oracle.
\end{remark}

\section{Proof of Lemma \ref{good hyper graph probability over all bipartite hypergraphs}}\label{section:good-hypergraph-proofs}

Recall that $\T_n^{(k)}$ denotes the set of all $2$-colorable graphs on a set $V$ of $n$ vertices.
Instead of directly considering a $k$-graph chosen from $\T_n^{(k)}$ we will instead consider $k$-graph from a family which is significantly easier to analyze and is defined next.

\begin{definition}\label{random hypergraph set definition}
Suppose $V$ is a set of $n$ vertices and $S$ is a subset of $V$ of size $|S| \leq n/2$.
Let \( \mathcal{T}_{S,n}^{(k)} \) denote the subset of $\T_n^{(k)}$ consisting of the $k$-graphs whose edges are all $k$-tuples intersecting both $S$ and $V\setminus S$.
\end{definition}

Note that an equivalent way to define \( \mathcal{T}_{S,n}^{(k)} \) is as all $k$-graphs in $\T_n^{(k)}$ that have a $2$-coloring in which $S,V \setminus S$ are the independent sets.
Observe that a $2$-colorable $k$-graph can belong to several of the sets \( \mathcal{T}_{S,n}^{(k)} \) (e.g. the empty $k$-graph belongs to all of them).
Another observation we will use in all the proofs below is that a uniformly random $k$-graph from \( \mathcal{T}_{S,n}^{(k)} \) can be generated by picking every
$k$-tuple that intersects both $S$ and $V \setminus S$ with probability $1/2$, where these choices are independent. Indeed, the number of $k$-graphs in
\( \mathcal{T}_{S,n}^{(k)} \) is $2^{\binom{n}{k}-\binom{|S|}{k}-\binom{n-|S|}{k}}$ and the above process picks each of these $k$-graphs with probability $(1/2)^{\binom{n}{k}-\binom{|S|}{k}-\binom{n-|S|}{k}}$.
In what follows we use ${\mathcal H} \sim \mathcal{T}_{S,n}^{(k)}$ to denote a uniformly chosen ${\mathcal H}$ from $\mathcal{T}_{S,n}^{(k)}$.

The key step towards the proof of Lemma \ref{good hyper graph probability over all bipartite hypergraphs} is the following lemma, which considers the much simpler
case of $k$-graphs chosen from \( \mathcal{T}_{S,n}^{(k)} \) where $S$ is not too small.

\begin{lemma}\label{good hypergraph probability theorem}
If $n/4 \leq |S| \leq n/2$ then for all large enough $n$, a $k$-graph ${\mathcal H} \sim \mathcal{T}_{S,n}^{(k)}$ is good with probability at least $1-2^{-3n}$.
\end{lemma}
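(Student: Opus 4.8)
The plan is to work in the independent-edge model for $\mathcal{T}_{S,n}^{(k)}$ described after Definition~\ref{random hypergraph set definition}: fix $S$ with $n/4 \le |S| \le n/2$ and include each $k$-tuple meeting both $S$ and $V\setminus S$ independently with probability $1/2$. Since $S, V\setminus S$ is a legal $2$-coloring of the resulting $\mathcal{H}$ and $\ell = 5k \ge 2k-3$, Claim~\ref{uniquecoloirng} forces: for any copy $K$ of $K_{\ell,\ell}$ in $\mathcal{H}$ with parts $A, B$, the partition $\{A, B\}$ equals $\{V(K)\cap S,\ V(K)\setminus S\}$, so after relabeling I may assume $A \subseteq S$ and $B \subseteq V\setminus S$. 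Consequently $N_A \subseteq V\setminus S$ and $N_B \subseteq S$, and for $u \in S$ any $(k-1)$-tuple from $N_B$ together with $u$ would be an edge inside $S$ — impossible — so $N(u, N_B) = \emptyset$. It thus suffices to prove, except with probability $2^{-3n}$: property~(i), and the strengthening of property~(ii) that for every $\ell$-subset $A \subseteq S$ and every $u \in S$ one has $|N(u, N_A)| \ge n^{k-1}/k^{4k}$, together with the symmetric statement exchanging $S$ and $V\setminus S$.

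For property~(i), I would let $X$ be the number of copies of $K_{\ell,\ell}$ in $\mathcal{H}$ having one part inside $S$ and the other inside $V\setminus S$; this lower bounds the total number of copies. Each such copy is the conjunction of $m_0 := 2\ell\binom{\ell}{k-1}$ prescribed candidate edges, so $\mathbb{E}[X] = \binom{|S|}{\ell}\binom{|V\setminus S|}{\ell}\,2^{-m_0} = \Theta(n^{2\ell})\cdot 2^{-m_0}$; since $m_0 \le 10k\cdot 2^{5k} \ll 2^{10k}$, this is at least $2 n^{2\ell}/2^{2^{10k}}$ for $n$ large. For concentration I would expose the candidate edges one at a time and apply a bounded-differences inequality: a single candidate edge is a prescribed edge of at most $O(n^{2\ell-k})$ of the counted copies, and there are at most $\binom{n}{k} = O(n^k)$ candidate edges, so the sum of squared differences is $O(n^{4\ell-k})$; with deviation $t = \mathbb{E}[X]/2 = \Theta(n^{2\ell})$ this gives $\Pr[X < \mathbb{E}[X]/2] \le \exp(-\Omega(n^k))$, far below $2^{-3n}$ for $n$ large.

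For property~(ii), fix an $\ell$-subset $A \subseteq S$. For $w \in V\setminus S$ the event $\{w \in N_A\}$ depends only on the $\binom{\ell}{k-1}$ edges $\{w\}\cup T$ with $T \subseteq A$ a $(k-1)$-set, and $\Pr[w \notin N_A] = 2^{-\binom{\ell}{k-1}}$, independently over $w$. Because $\binom{\ell}{k-1} = \binom{5k}{k-1}$ is large, a Chernoff bound on the (binomial, tiny-mean) count of $w \notin N_A$ gives $\Pr[|N_A| < n/4] \le \big(4e\cdot 2^{-\binom{\ell}{k-1}}\big)^{n/4} \le 2^{-25n}$ for $k \ge 3$, which survives the union bound over all $\le n^\ell$ choices of $A$. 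Now condition on the edges determining $N_A$ and assume $|N_A| \ge n/4$. For $u \in S$, the edges $\{u\}\cup T'$ with $T'$ a $(k-1)$-subset of $N_A \subseteq V\setminus S$ are, for $k \ge 3$, distinct from every exposed edge (they have $k-1 \ge 2$ vertices in $V\setminus S$, whereas each exposed edge has exactly one), hence not yet revealed; so $|N(u, N_A)|$ is a sum of $\binom{|N_A|}{k-1}$ independent fair coins, and a Chernoff bound gives $\Pr\!\big[|N(u, N_A)| < \tfrac14\binom{|N_A|}{k-1}\big] \le e^{-\Omega(n^{k-1})}$, while $\tfrac14\binom{n/4}{k-1} \ge n^{k-1}/k^{4k}$ for $n$ large. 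A union bound over all $A$, all $u$, and the symmetric role of $V\setminus S$ keeps the total failure probability at $2^{-\Omega(n)}$.

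Putting the two parts together, $\mathcal{H}\sim\mathcal{T}_{S,n}^{(k)}$ fails to be good with probability at most $\exp(-\Omega(n^k)) + 2^{-\Omega(n)} \le 2^{-3n}$ for all large $n$. I expect the main obstacle to be not any individual estimate but arranging the union bounds so they survive — and the reason they do is precisely the (deliberately wasteful) constants $2^{2^{10k}}$ and $k^{4k}$ in Definition~\ref{definition:good-hypergraph}: choosing $\ell = 5k$ makes $\binom{\ell}{k-1}$ simultaneously large enough that $|N_A|$ is almost all of $V\setminus S$ (so the $n^\ell$-fold union bound over putative parts costs essentially nothing) and small enough that the expected number of $K_{\ell,\ell}$-copies still comfortably exceeds $n^{2\ell}/2^{2^{10k}}$.
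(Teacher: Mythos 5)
Your proof is correct and follows essentially the same route as the paper's: a bounded-differences concentration bound on the $K_{\ell,\ell}$ count for property~(i), and for property~(ii) first showing $|N_A|,|N_B|$ are linear in $n$ and then that $|N(u,\cdot)|$ is large, with the union bounds absorbed by the generous constants in Definition~\ref{definition:good-hypergraph}. The one place you diverge is in handling the dependence between $N_A$ and $N(u,N_A)$: the paper proves the stronger statement that $|N(u,X)|\geq n^{k-1}/k^{4k}$ for \emph{every} subset $X\subseteq V\setminus S$ of size $n/8$ (a $2^{2n}$-fold union bound absorbed by the $\exp(-\Omega(n^{k-1}))$ tail), while you condition on the edges defining $N_A$ and observe that, since $k\geq 3$, the edges relevant to $N(u,N_A)$ have at least two vertices in $V\setminus S$ and are therefore fresh --- both handlings are correct.
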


The proof of this lemma will follow from the following three lemmas,
whose proofs make use of the classical McDiarmid Inequality \cite{McDiarmid_1989}.

\begin{fact}[McDiarmid's Inequality \cite{McDiarmid_1989}]\label{fact: McDiarmid's Inequality}
        Let $f:\X_1\times\dots\X_n\rightarrow\mathbb{R}$ satisfy the bounded differences property with bounds $c_1,\dots,c_n$, meaning
        \begin{displaymath}
            \sup_{x_i'\in\X_i}|f(x_1,\dots,x_{i-1},x_i,\dots,x_n)-f(x_1,\dots,x_{i-1}',x_i,\dots,x_n)|\leq c_i\;.
        \end{displaymath}
        Consider the independent random variables $X_1,\dots,X_n$ where $X_i\in\X_i$. Then for any $t>0$
        \begin{displaymath}
            \mathbb{P}\big(f(X_1,\dots,X_n) \leq \mathbb{E}[f(X_1,\dots,X_n)] - t\big)\leq \exp\bigg(-\frac{2t^2}{\sum_{i=1}^nc_i^2}\bigg)\;.
        \end{displaymath}
    \end{fact}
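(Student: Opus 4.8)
The plan is to prove this via the classical Doob-martingale-plus-Azuma argument, which is the standard route to the bounded differences inequality. Let $(\mathcal{F}_i)_{i=0}^n$ be the filtration with $\mathcal{F}_i$ generated by $X_1,\dots,X_i$, and define the Doob martingale $Z_i := \mathbb{E}[f(X_1,\dots,X_n)\mid \mathcal{F}_i]$. Then $Z_0 = \mathbb{E}[f(X_1,\dots,X_n)]$ is a constant, $Z_n = f(X_1,\dots,X_n)$, and by the tower property $(Z_i)$ is a martingale. Writing $D_i := Z_i - Z_{i-1}$ we have $\mathbb{E}[D_i \mid \mathcal{F}_{i-1}] = 0$ and $f(X_1,\dots,X_n) - \mathbb{E}[f(X_1,\dots,X_n)] = \sum_{i=1}^n D_i$, so it suffices to control the lower tail of this martingale sum.

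The first key step is to show that, conditioned on $\mathcal{F}_{i-1}$, the difference $D_i$ lies in an interval of length at most $c_i$. Using independence of the coordinates, I would write $Z_i = g_i(X_1,\dots,X_i)$ where $g_i(x_1,\dots,x_i) := \mathbb{E}[f(x_1,\dots,x_i,X_{i+1},\dots,X_n)]$ (the expectation over $X_{i+1},\dots,X_n$ only), and correspondingly $Z_{i-1} = \mathbb{E}[g_i(X_1,\dots,X_{i-1},X_i)\mid\mathcal{F}_{i-1}]$. Fixing $X_1=x_1,\dots,X_{i-1}=x_{i-1}$, the quantity $Z_{i-1}$ becomes a constant and $D_i = g_i(x_1,\dots,x_{i-1},X_i) - Z_{i-1}$ ranges within $[\inf_{x_i} g_i - Z_{i-1},\ \sup_{x_i} g_i - Z_{i-1}]$, an interval of width $\sup_{x_i} g_i - \inf_{x_i} g_i$. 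This width is at most $c_i$ because averaging the bounded-differences hypothesis on $f$ over the coordinates $X_{i+1},\dots,X_n$ shows that $g_i$ also satisfies $\sup_{x_i,x_i'}|g_i(\dots,x_i,\dots)-g_i(\dots,x_i',\dots)| \le c_i$.

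Next I would invoke Hoeffding's lemma in conditional form: if $\mathbb{E}[Y\mid\mathcal{F}]=0$ and, conditioned on $\mathcal{F}$, $Y$ takes values in an interval of length $\ell$, then $\mathbb{E}[e^{\lambda Y}\mid\mathcal{F}] \le e^{\lambda^2\ell^2/8}$ for every $\lambda\in\mathbb{R}$; I would include its short proof (bound $e^{\lambda y}$ on the interval by the chord, take expectations, and show the logarithm of the resulting expression has second derivative at most $\ell^2/4$, so a Taylor estimate gives the claim). Applying this with $Y=D_i$, $\mathcal{F}=\mathcal{F}_{i-1}$, $\ell=c_i$ (and with $-\lambda$ in place of $\lambda$, since we want the lower tail) gives $\mathbb{E}[e^{-\lambda D_i}\mid\mathcal{F}_{i-1}] \le e^{\lambda^2 c_i^2/8}$. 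Finally, the Chernoff method plus iterated conditioning finishes the argument: for $\lambda>0$,
\[
\mathbb{P}\big(f(X_1,\dots,X_n)\le\mathbb{E}[f(X_1,\dots,X_n)]-t\big) = \mathbb{P}\Big(e^{-\lambda\sum_{i=1}^n D_i}\ge e^{\lambda t}\Big) \le e^{-\lambda t}\,\mathbb{E}\Big[e^{-\lambda\sum_{i=1}^n D_i}\Big],
\]
by Markov's inequality; peeling off the last factor by conditioning on $\mathcal{F}_{n-1}$, using the bound above, and inducting yields $\mathbb{E}[e^{-\lambda\sum_i D_i}] \le \exp\big(\lambda^2\sum_{i=1}^n c_i^2/8\big)$. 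Hence the probability is at most $\exp\big(\lambda^2\sum_{i=1}^n c_i^2/8 - \lambda t\big)$, and choosing $\lambda = 4t/\sum_{i=1}^n c_i^2$ produces the claimed bound $\exp\big(-2t^2/\sum_{i=1}^n c_i^2\big)$.

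The main obstacle is the second step above — carefully verifying that each martingale difference $D_i$ is confined, conditionally on $\mathcal{F}_{i-1}$, to an interval of width $c_i$, which is exactly where independence of the coordinates is used and where the hypothesis on $f$ must be transferred to the averaged functions $g_i$. Everything afterward (conditional Hoeffding's lemma and the exponential-moment telescoping over the martingale) is routine.
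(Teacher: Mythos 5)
Your proof is correct: the Doob martingale construction, the observation that each conditional increment $D_i$ is confined to an interval of width $c_i$ by averaging the bounded-differences hypothesis over the later coordinates (using independence), the conditional Hoeffding lemma, the telescoping of exponential moments, and the choice $\lambda = 4t/\sum_i c_i^2$ all check out and produce the stated bound. Note, however, that the paper does not prove this statement at all — it cites it as a known \emph{Fact} from McDiarmid's survey — so there is no ``paper proof'' to compare against; your argument is simply the standard Azuma--Hoeffding derivation of the bounded differences inequality, which is the right one to supply if a proof were needed.
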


\begin{lemma}\label{lemma: good-hypergraph property i}
Suppose $n/4 \leq |S|\leq n/2$ and $\ell =5k$. Then ${\mathcal H} \sim \mathcal{T}_{S,n}^{(k)}$
satisfies item $(i)$ of Definition \ref{definition:good-hypergraph} with probability at least $1-e^{-n^k/2^{2^{11k}}}$.
\end{lemma}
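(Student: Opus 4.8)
The plan is to let $X = X(\mathcal{H})$ denote the number of copies of $K_{\ell,\ell}$ in $\mathcal{H} \sim \mathcal{T}_{S,n}^{(k)}$, show that $\mathbb{E}[X]$ lies far above the threshold $n^{2\ell}/2^{2^{10k}}$, and then obtain the desired concentration from McDiarmid's inequality (Fact~\ref{fact: McDiarmid's Inequality}). Throughout I would use the description of $\mathcal{H} \sim \mathcal{T}_{S,n}^{(k)}$ recorded after Definition~\ref{random hypergraph set definition}: each $k$-tuple meeting both $S$ and $V \setminus S$ is an edge independently with probability $1/2$, and no other $k$-tuple is ever an edge.

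First I would bound $\mathbb{E}[X]$ from below. Since $\ell = 5k \ge 2k-3$, Claim~\ref{uniquecoloirng} tells us that a copy of $K_{\ell,\ell}$ is determined by an unordered pair $\{A,B\}$ of disjoint $\ell$-sets, all $2\ell\binom{\ell}{k-1}$ of whose ``cross'' edges (one vertex on one side, $k-1$ on the other) are present. I restrict to pairs with $A \subseteq S$ and $B \subseteq V \setminus S$: for such a pair every cross edge meets both $S$ and $V \setminus S$, hence is a legal $k$-tuple present independently with probability $1/2$, so the pair forms a copy with probability exactly $2^{-2\ell\binom{\ell}{k-1}}$. Distinct such pairs give distinct copies, so with $|S| \ge n/4$, $n - |S| \ge n/4$, and $\binom{m}{\ell} \ge (m/(2\ell))^{\ell}$ for $m \ge 2\ell$,
\[
\mathbb{E}[X] \;\ge\; \binom{|S|}{\ell}\binom{n-|S|}{\ell}\, 2^{-2\ell\binom{\ell}{k-1}} \;\ge\; \frac{n^{2\ell}}{(8\ell)^{2\ell}\, 2^{2\ell\binom{\ell}{k-1}}}
\]
for all large $n$. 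Since $\ell = 5k$ and $k \ge 3$, crude estimates give $2\ell\binom{\ell}{k-1} \le 10k\cdot 2^{5k}$ and $(8\ell)^{2\ell} = 2^{O(k\log k)}$, so the denominator is at most $2^{2^{7k}}$ and $\mathbb{E}[X] \ge n^{2\ell}/2^{2^{7k}}$, which dwarfs the required $n^{2\ell}/2^{2^{10k}}$.

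Next I would apply McDiarmid to $X$, viewed as a function of the independent bits $(Z_e)$ indexed by the at most $\binom{n}{k} \le n^k$ legal $k$-tuples $e$, where $Z_e \in \{0,1\}$ records whether $e \in E(\mathcal{H})$. Flipping a single $Z_e$ changes $X$ by at most the number of pairs $\{A,B\}$ of disjoint $\ell$-sets for which $e$ is a cross edge; choosing which vertex of $e$ sits alone on its side (at most $k$ ways) and then the remaining $2\ell - k$ vertices of the copy shows this count is at most $k\, n^{2\ell-k}$. Hence the bounded-differences constants obey $c_e \le k\, n^{2\ell-k}$ and $\sum_e c_e^2 \le k^2 n^{4\ell-k}$, and Fact~\ref{fact: McDiarmid's Inequality} with $t = \tfrac12\mathbb{E}[X]$ gives
\[
\mathbb{P}\!\Big(X \le \tfrac12\mathbb{E}[X]\Big) \;\le\; \exp\!\Big(-\frac{\mathbb{E}[X]^2}{2\sum_e c_e^2}\Big) \;\le\; \exp\!\Big(-\frac{n^{k}}{2k^2\, 2^{2^{7k+1}}}\Big) \;\le\; e^{-n^k/2^{2^{11k}}},
\]
the last inequality because $\log_2(2k^2) + 2^{7k+1} \le 2^{11k}$ for all $k\ge 1$. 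Finally $\tfrac12\mathbb{E}[X] \ge n^{2\ell}/2^{2^{7k}+1} \ge n^{2\ell}/2^{2^{10k}}$, so on the event $X > \tfrac12\mathbb{E}[X]$ item~$(i)$ of Definition~\ref{definition:good-hypergraph} holds, which completes the argument.

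I expect the only delicate point to be the bounded-differences estimate: it is essential to use that each edge lies in only $O(n^{2\ell-k})$ copies of $K_{\ell,\ell}$ rather than the trivial $O(n^{2\ell})$, since that extra factor $n^{-k}$ is exactly what turns McDiarmid's bound into the $e^{-n^k/\mathrm{poly}(k)}$-type tail the lemma demands — with the trivial bound the exponent would tend to $0$ as $n \to \infty$. Everything else is routine bookkeeping with the (deliberately unoptimized) $k$-dependent constants.
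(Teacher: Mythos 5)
Your proposal is correct and follows essentially the same route as the paper: lower-bound the expected number of $K_{\ell,\ell}$ copies by $\binom{|S|}{\ell}\binom{n-|S|}{\ell}2^{-2\ell\binom{\ell}{k-1}}$, observe that each edge lies in at most $O_k(n^{2\ell-k})$ copies to get bounded-differences constants of order $n^{2\ell-k}$, and apply McDiarmid's inequality to get a tail of the form $e^{-n^k/\mathrm{poly}(k)}$. The only differences are bookkeeping (you take $t=\tfrac12\mathbb{E}[X]$ while the paper centers the deviation at the threshold $n^{2\ell}/2^{2^{10k}}$, and you carry an extra harmless factor of $k$ in $c_e$), which do not change the argument.
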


\begin{proof}
The expected number of copies of $K_{\ell,\ell}$ in $\mathcal{H}\sim\T^{(k)}_{S, n}$ is
\begin{equation}\label{eqcountkll}
        \binom{|S|}{\ell}\binom{n-|S|}{\ell}2^{-2\ell\binom{\ell}{k-1}} \geq \binom{n/4}{\ell}\binom{n/4}{\ell}2^{-2\ell\binom{\ell}{k-1}} \geq \left(\frac{n}{4\ell}\right)^{2\ell}2^{-2\ell\binom{\ell}{k-1}}\geq n^{2\ell}/2^{2^{9k}}.
\end{equation}
    Let $x_e$ be the indicator $1_e(\mathcal{H})$ for all $e\in\binom{n}{k}$, and $\phi:\{0,1\}^{\binom{n}{k}}\rightarrow \mathbb{R}$ be the number of copies of $K_{\ell,\ell}$ in the $k$-graph represented by all $x_e$.
    Each edge is part of at most $\binom{n}{2\ell-k}$ possible copies of $K_{\ell,\ell}$. Hence, in the notation of Fact \ref{fact: McDiarmid's Inequality}, we have $c_e \leq n^{2\ell -k}$ for each variable $x_e$.
    Since there are at most $n^k$ variables, we conclude that $\sum_ec^2_e \leq n^{4\ell -k}$. If the number of copies of $K_{\ell,\ell}$ is less than $n^{2\ell}/2^{2^{10k}}$, then by (\ref{eqcountkll}), this means
    that $\phi$ deviates from its expectation by at least $n^{2\ell}/2^{2^{10k}}$. Hence, by Fact \ref{fact: McDiarmid's Inequality} the probability of this event is at most
\[
    \exp\left(-\frac{(n^{2\ell}/2^{2^{10k}})^2}{n^{4\ell -k}}\right) \leq  e^{-n^k/2^{2^{11k}}}\;.\qedhere
\]
\end{proof}

\begin{lemma}\label{lemma: good-hypergraph property ii}
Suppose $n/4 \leq |S|\leq n/2$ and $\ell = 5k$. Then ${\mathcal H} \sim \mathcal{T}_{S,n}^{(k)}$
satisfies the following with probability at least $1-2^{-3n}$: For every set $A$ of $\ell$ vertices in $S$ there are at least $n/8$ vertices
$v$ in $V \setminus S$ satisfying $N(v,A) \neq \emptyset$. Similarly, for every set $B$ of $\ell$ vertices in $V \setminus S$ there are at least $n/8$ vertices $v$ in $S$ satisfying $N(v,B) \neq \emptyset$.
\end{lemma}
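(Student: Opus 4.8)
The plan is to fix the set $A$ (and symmetrically $B$), observe that the relevant ``missing edge'' events are independent across the candidate vertices $v$, bound a single binomial tail, and finish with a union bound over all choices of $A$ (resp. $B$).

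First I would set up the independence. Recall that $\mathcal H\sim\mathcal T_{S,n}^{(k)}$ is obtained by including each $k$-tuple meeting both $S$ and $V\setminus S$ independently with probability $1/2$. Fix $A\subseteq S$ with $|A|=\ell$. For a vertex $v\in V\setminus S$, whether $N(v,A)\neq\emptyset$ depends only on the $\binom{\ell}{k-1}$ potential edges $\{v\}\cup T$ with $T\in\binom{A}{k-1}$; each such $k$-tuple meets both $S$ and $V\setminus S$, and for distinct $v$ these families of potential edges are disjoint. Hence the events $\{N(v,A)\neq\emptyset\}$, over $v\in V\setminus S$, are mutually independent, and each fails (i.e.\ $N(v,A)=\emptyset$) with probability exactly $q:=2^{-\binom{\ell}{k-1}}$.

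Next I would bound the tail for fixed $A$. Let $Z_A$ count the $v\in V\setminus S$ with $N(v,A)=\emptyset$, so $Z_A\sim\Bin(|V\setminus S|,q)$. Since $|S|\le n/2$ gives $|V\setminus S|\ge n/2$, it suffices to prove $Z_A\le 3n/8$, which leaves at least $n/2-3n/8=n/8$ vertices $v$ with $N(v,A)\neq\emptyset$. By the standard binomial tail estimate,
\[
\mathbb{P}[Z_A\ge 3n/8]\ \le\ \binom{n}{\lceil 3n/8\rceil}q^{3n/8}\ \le\ 2^{n}\cdot 2^{-\binom{\ell}{k-1}\cdot 3n/8}.
\]
Since $\ell=5k$ and $k\ge 3$, we have $\binom{\ell}{k-1}=\binom{5k}{k-1}\ge\binom{15}{2}=105$, so the right-hand side is at most $2^{n-39n}=2^{-38n}$. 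As there are at most $\binom{|S|}{\ell}\le n^{\ell}$ choices of $A$, a union bound shows that the probability that some $\ell$-set $A\subseteq S$ has fewer than $n/8$ vertices $v\in V\setminus S$ with $N(v,A)\neq\emptyset$ is at most $n^{\ell}2^{-38n}$.

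The symmetric statement is handled identically: for $B\subseteq V\setminus S$ with $|B|=\ell$, the events $\{N(v,B)\neq\emptyset\}$ over $v\in S$ are independent, each failing with probability $q$, so the count $Z_B$ of $v\in S$ with $N(v,B)=\emptyset$ is $\Bin(|S|,q)$; using $|S|\ge n/4$ it suffices that $Z_B\le n/8$, and $\mathbb{P}[Z_B\ge n/8]\le 2^{n/2}\cdot 2^{-105n/8}\le 2^{-12n}$, which after a union bound over the $\le n^{\ell}$ sets $B$ still gives $n^{\ell}2^{-12n}$. Summing the two failure probabilities, for all large enough $n$ (using $n^{\ell}=2^{o(n)}$) the total is below $2^{-3n}$, which is the claim. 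There is no real obstacle here; the one point requiring care is the choice of concentration tool: one should bound the binomial $Z_A$ directly rather than apply a bounded-differences (McDiarmid) estimate to $\#\{v:N(v,A)\neq\emptyset\}$, because a ``bad'' vertex must miss \emph{all} $\binom{\ell}{k-1}$ of its candidate edges, which yields the large exponent $\Theta\!\big(\binom{\ell}{k-1}\,n\big)$ needed to absorb the $n^{\ell}$ union bound and still reach $2^{-3n}$, whereas McDiarmid on that count would only give exponent $\Theta(n/\binom{\ell}{k-1})$. Everything else is routine bookkeeping using $\ell=5k$ and $n/4\le|S|\le n/2$.
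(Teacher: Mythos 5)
Your proposal is correct and follows essentially the same route as the paper's proof: exploit that the "missing‑edge" events are independent across candidate vertices $v$ (because the relevant $\binom{\ell}{k-1}$ potential edges for distinct $v$ are disjoint), bound the binomial tail for the count of bad vertices, and union over the $\mathrm{poly}(n)$ choices of $A$ (resp.\ $B$). Your closing remark that McDiarmid applied to the vertex count would be too weak here is a sensible observation and consistent with the paper, which uses McDiarmid for Lemmas \ref{lemma: good-hypergraph property i} and \ref{lemma: good-hypergraph property iii} but not for this one.
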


\begin{proof}
Consider any set $B$ in $V \setminus S$ of size $\ell$. A vertex $v$ in $S$ satisfies $N(v,B) \neq \emptyset$ with probability
$(1/2)^{\binom{\ell}{k-1}}$. Hence, the probability that in a set of size $n/8$ all vertices $v$ satisfy $N(v,B) \neq \emptyset$
is $(1/2)^{\frac{n}{8}\binom{\ell}{k-1}} \leq 2^{-5n}$, where we have used the independence of the edges contributing to $N(v,B)$. Therefore, the probability that there is such a set of size $n/8$ is at most $2^{-4n}$.
Hence, with probability at least $1-2^{-4n}$ there are at least $n/4-n/8=n/8$ vertices satisfying $N(v,B) \neq \emptyset$. The same argument applies to every fixed set $A$ in $S$ of size $\ell$ (with even higher probability since $|V \setminus S| \geq |S|$). Since there are at most $\binom{n}{2\ell}$ choices for the sets $A,B$ we conclude that with probability at least $1-2^{-3n}$ the condition holds for all $A,B$.
\end{proof}

\begin{lemma}\label{lemma: good-hypergraph property iii}
Suppose $n/4 \leq |S|\leq n/2$ and $\ell = 5k$. Then ${\mathcal H} \sim \mathcal{T}_{S,n}^{(k)}$
satisfies the following with probability at least $1-2^{-3n}$: for every $u \in S$ (resp. $u \in V \setminus S$) and every set $X \subseteq V \setminus S$ (resp. $S$) of size $n/8$ we have $|N(u,X)|\geq n^{k-1}/k^{2k}$.
\end{lemma}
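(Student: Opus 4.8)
The plan is the standard ``union bound over a net'' argument, entirely analogous to the proofs of Lemmas~\ref{lemma: good-hypergraph property i} and~\ref{lemma: good-hypergraph property ii}. I would fix one vertex $u$ and one candidate set $X$, show that $|N(u,X)|\ge n^{k-1}/k^{2k}$ holds for that particular pair with probability at least $1-2^{-5n}$ (say), and then take a union bound over the at most $n$ choices of $u$, the at most $\binom{n}{n/8}\le 2^n$ choices of $X$, and the two symmetric roles of $S$ and $V\setminus S$. This costs a factor of at most $2n\cdot 2^n$, leaving total failure probability at most $2n\cdot 2^n\cdot 2^{-5n}\le 2^{-3n}$ for $n$ large, as required.

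To analyze a fixed pair, say $u\in S$ and $X\subseteq V\setminus S$ with $|X|=n/8$ (the case $u\in V\setminus S$, $X\subseteq S$ is identical, and legitimate since $n/8\le|S|$ and $n/8\le|V\setminus S|$ under the hypothesis $n/4\le|S|\le n/2$), I would write $|N(u,X)|=\sum_{T}\mathbf{1}[\{u\}\cup T\in E(\mathcal H)]$, where $T$ ranges over the $\binom{n/8}{k-1}$ many $(k-1)$-subsets of $X$. Each such $k$-set $\{u\}\cup T$ meets both $S$ (via $u$) and $V\setminus S$ (via $T$, since $k\ge 2$), hence is a potential edge of $\mathcal H$, present independently with probability $1/2$ in the process generating $\mathcal H\sim\mathcal T^{(k)}_{S,n}$; moreover these $k$-sets are pairwise distinct since $u\notin X$. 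Thus $|N(u,X)|$ is a sum of $\binom{n/8}{k-1}$ independent $\mathrm{Bernoulli}(1/2)$ variables, so $\mathbb E|N(u,X)|=\tfrac12\binom{n/8}{k-1}$, and an elementary estimate (using $\binom{n/8}{k-1}\ge (n/8-k)^{k-1}/(k-1)!$ together with a numeric inequality such as $k^{2k}\ge 4\cdot 8^{k-1}(k-1)!$, which holds for all $k\ge 3$) gives $\mathbb E|N(u,X)|\ge 2n^{k-1}/k^{2k}$ once $n$ is large.

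For the concentration step I would apply McDiarmid's inequality (Fact~\ref{fact: McDiarmid's Inequality}) to $|N(u,X)|$, regarded as a function of the at most $\binom{n/8}{k-1}\le n^{k-1}$ relevant edge-indicator variables; flipping one indicator changes the value by at most $1$, so $\sum_e c_e^2\le n^{k-1}$. The event $|N(u,X)|<n^{k-1}/k^{2k}$ forces a downward deviation from the mean of at least $\mathbb E|N(u,X)|-n^{k-1}/k^{2k}\ge\tfrac12\mathbb E|N(u,X)|\ge n^{k-1}/k^{2k}=:t$, which by Fact~\ref{fact: McDiarmid's Inequality} occurs with probability at most $\exp(-2t^2/n^{k-1})=\exp(-2n^{k-1}/k^{4k})$. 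Since $k$ is a constant and $k-1\ge 2$, this is $e^{-\Omega(n^2)}$, in particular at most $2^{-5n}$ for $n$ large, which is exactly what the union bound above needs.

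I do not expect any genuine obstacle here: the argument is a routine Chernoff/McDiarmid estimate followed by a union bound over a collection of size $2^{O(n)}$. The only two points that require a bit of care are (i) checking the numeric inequality that makes $\mathbb E|N(u,X)|$ exceed the target $n^{k-1}/k^{2k}$ by a constant factor, so that there is enough slack for the deviation bound; and (ii) correctly invoking the independence structure of $\mathcal T^{(k)}_{S,n}$ --- namely that every $k$-set of the form $\{u\}\cup T$ with $u\in S$ and $T\subseteq V\setminus S$ is an independently present potential edge --- which is precisely the place where working with $\mathcal T^{(k)}_{S,n}$ rather than the harder $\mathcal T^{(k)}_{n}$ pays off.
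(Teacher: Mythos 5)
Your proposal is correct and follows essentially the same route as the paper: write $|N(u,X)|$ as a sum of $\binom{n/8}{k-1}$ independent Bernoulli$(1/2)$ edge indicators, apply McDiarmid's inequality with $c_i=1$ to get a failure probability of $e^{-\Omega(n^{k-1})}$ per pair $(u,X)$, and union bound over the at most $2^{O(n)}$ choices. Your version is if anything slightly more careful with the constants (you verify the expectation exceeds the stated threshold $n^{k-1}/k^{2k}$ by a factor of $2$, whereas the paper's proof works with the looser threshold $n^{k-1}/k^{4k}$), so there is nothing to fix.
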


\begin{proof}
Consider $u \in S$ and $X \subseteq V \setminus S$ of size $n/8$. Note that we can think of
$|N(u,X)|$ as a sum of $\binom{n/8}{k-1}$ binary indicator random variables. Hence, in the setting of Fact \ref{fact: McDiarmid's Inequality}
we have $\binom{n/8}{k-1}$ variables, each with $c_i=1$. Then the expected size of $N(u,X)$ is $\frac12 \binom{n/8}{k-1} \geq n^{k-1}/k^{3k}$.
So by Fact \ref{fact: McDiarmid's Inequality} the probability that $|N(u,X)| \leq n^{k-1}/k^{4k}$ is at most
$$
e^{-\frac{(n^{k-1}/k^{4k})^2}{n^{k-1}}} \leq e^{-n^{k-1}/k^{8k}}\;.
$$
Since there are at most $2^{2n}$ choices for $u$ and $X$, we get that the lemma holds with probability at least $1-2^{-n^{k-1}/k^{9k}}$.
\end{proof}

\begin{proof}[Proof (of Lemma \ref{good hypergraph probability theorem})]
Suppose $S\subset V$ satisfies $n/4 \leq |S|\leq n/2$ and $\mathcal{H}\sim\T^{(k)}_{S, n}$. Let $\ell = 5k$ be as in (\ref{l constant function}). The probability that either Lemma \ref{lemma: good-hypergraph property i}, \ref{lemma: good-hypergraph property ii} or \ref{lemma: good-hypergraph property iii} fails is at most $$e^{-n^k/2^{2^{11k}}} + 2^{-5n} + 2^{-n^{k-1}/k^{9k}}< 2^{-2n}$$ for large enough $n$. It thus remains to prove
that any $\mathcal{H}$ satisfying the assertions of these three lemmas is good. Indeed, Lemma \ref{lemma: good-hypergraph property i} guarantees that item $(i)$ of Definition \ref{definition:good-hypergraph} holds. As to item $(ii)$, take any copy $K$ of $K_{\ell,\ell}$ in $\mathcal{H}$.
Since $\ell \geq 2k-3$ we know by Claim \ref{uniquecoloirng} that $K$ has a unique $2$-coloring. Since by construction $S,V \setminus S$ is a legal $2$-coloring of $\mathcal{H}$
and thus of $K$, we deduce that the independent sets $A,B$ of $K$ are contained in $S,V \setminus S$.
If Lemma \ref{lemma: good-hypergraph property ii} holds, then the sets $N_A,B_B$ in item $(ii)$ of Definition \ref{definition:good-hypergraph} are each of size at least $n/8$. Applying Lemma \ref{lemma: good-hypergraph property ii} to $N_A,N_B$
guarantees that item $(ii)$ of Definition \ref{definition:good-hypergraph} holds.
\end{proof}

\begin{remark}
The algorithms given in Sections \ref{coloring algorithm section} and \ref{oracle section} additionally apply for $\T^{(k)}_{S,n}$, and any other distribution over $2$-colorable $k$-graphs for which $\mathcal{H}$ drawn from the distribution is a good $k$-graph with probability at least $1 - 2^{-2n}$. This is because the guarantees of the analysis of the correctness and runtime of the algorithms only rely on the $k$-graph being good with high probability.
\end{remark}

Given Lemma \ref{good hypergraph probability theorem}, we now prove Lemma \ref{good hyper graph probability over all bipartite hypergraphs}.

\begin{proof}[Proof (of Lemma \ref{good hyper graph probability over all bipartite hypergraphs})]
Let $V$ be a set of $n$ vertices and $\T_n^{(k)}$ and ${\cal T}_{S,n}^{(k)}$ as defined above. Let ${\cal B}_n^{(k)}$ denote the
subset of $\T_n^{(k)}$ consisting of the $k$-graphs that are not good. Our goal in thus to prove that $|{\cal B}_n^{(k)}| \leq |\T_n^{(k)}|/2^{2n}$. Let ${\cal B}_{S,n}^{(k)}$ denote the
subset of ${\cal T}_{S,n}^{(k)}$ consisting of the $k$-graphs that are not good. We will show that if $|S| \leq n/4$ then
\begin{equation}\label{eqBvsT}
|{\cal B}_{S,n}^{(k)}| \leq |\T_n^{(k)}|/2^{3n}\;,
\end{equation}
but let us first see how to derive the lemma from this bound. Indeed,
\begin{align*}
|{\cal B}_n^{(k)}| &\leq \sum_{S \subseteq V: |S| \leq n/2}|{\cal B}_{S,n}^{(k)}|\\
&=\sum_{S \subseteq V: |S| \leq n/4}|{\cal B}_{S,n}^{(k)}|+\sum_{S \subseteq V: n/4 \leq |S| \leq n/2}|{\cal B}_{S,n}^{(k)}|\\
& \leq \sum_{S \subseteq V: |S| \leq n/4}|{\cal T}_{n}^{(k)}|/2^{3n}+\sum_{S \subseteq V: n/4 \leq |S| \leq n/2}|{\cal B}_{S,n}^{(k)}|\\
& \leq \sum_{S \subseteq V: |S| \leq n/4}|{\cal T}_{n}^{(k)}|/2^{3n}+\sum_{S \subseteq V: n/4 \leq |S| \leq n/2}|{\cal T}_{S,n}^{(k)}|/2^{3n}\\
& \leq \sum_{S \subseteq V: |S| \leq n/4}|{\cal T}_{n}^{(k)}|/2^{3n}+\sum_{S \subseteq V: n/4 \leq |S| \leq n/2}|{\cal T}_{n}^{(k)}|/2^{3n}\\
& \leq |{\cal T}_{n}^{(k)}|/2^{2n}\;,
\end{align*}
where in the second inequality we use (\ref{eqBvsT}) and in the third inequality we use Lemma \ref{good hypergraph probability theorem}.

We now prove (\ref{eqBvsT}). Fix $S \subseteq V$ of size $n/2$. Then we clearly have
$$
|\T_n^{(k)}| \geq |\T_{S,n}^{(k)}|=2^{\binom{n}{k}-2\binom{n/2}{k}}.
$$
Similarly, for every $|S| \leq n/4$ we have
$$
|{\cal B}_{S,n}^{(k)}|= 2^{\binom{n}{k}-\binom{|S|}{k}-\binom{n-|S|}{k}} \leq 2^{\binom{n}{k}-\binom{3n/4}{k}}\;.
$$
So to prove (\ref{eqBvsT}) we just need to show that $\binom{3n/4}{k} \geq 2\binom{n/2}{k} + 3n$ for all large enough $n$. Indeed,
$$
\binom{3n/4}{k} \geq \binom{n/2}{k} + \frac{n}{4}\binom{n/2}{k-1} \geq \binom{n/2}{k} + \frac{n}{4}\binom{n/2-1}{k-1}=\binom{n/2}{k} + \frac{2k}{4}\binom{n/2}{k}\;.
$$
Since $k \geq 3$, the RHS is at least $\frac52 \binom{n/2}{k}$ which is at least $2\binom{n/2}{k} + 3n$ for large enough $n$, thus completing the proof.
\end{proof}

\bibliography{bibliography}

\appendix

\section{Average-Case LCA for Hypergraph Coloring}\label{LCA partition oracle section}
In the next section, we present our \textit{coloring oracle} result within the context of the average-case Local Computation Algorithm (LCA) model \cite{beyond_worst_case_LCA}, proving Theorem \ref{LCA partition oracle}. The LCA model provides a general framework for efficiently accessing parts of a global solution without computing it in its entirety. The coloring oracle model can be viewed as a specific type of LCA, though there are some differences in the specifications of the average-case model that we will address in this section.

%The LCA model offers a framework for efficiently accessing parts of a global solution without computing it in its entirety. This is especially valuable when dealing with large inputs, where full computation is either impractical or unnecessary. LCAs respond to local queries by inspecting only a small portion of the input while ensuring global consistency.

We begin by formally defining the LCA model, then introduce the recent development of the average-case LCA model, which is relevant to our coloring oracle setting. Finally, we analyze our algorithm within the framework of the average-case LCA model.

\begin{definition}\label{LCA definition}
    A \textbf{Local Computation Algorithm (LCA)} for a problem $\Pi$ is an oracle $\mathcal{O}$ with the following properties. $\mathcal{O}$ is given probe access to input $G$, a sequence of random words\footnote{A random word is a random entry in $[n]$, which can equivalently be thought of as $O(\log n)$ bits of randomness received at once.} $\vec{r}$, and a local memory. For any query $q$ in a family of admissible queries to the output, $\mathcal{O}$ uses only its oracle access to $G$ (i.e. \textit{probes} to $G$), random words from $\vec{r}$, and local memory to answer the query $q$. After answering the query, $\mathcal{O}$ erases its local memory, including the query $q$ and its response.

    For an input $G$ and query $q$, let $PC_{\mathcal{O}}(G, q)$ be the expected number of probes (over the random choice of $\vec{r}$) the LCA makes to answer the query $q$ on the input $G$. Let $PC_{\mathcal{O}}(G) = \max_{q} PC_{\mathcal{O}}(G, q)$ be the maximum of the expected probe complexity over all queries to input $G$. Finally, the \textbf{probe complexity} $PC_{\mathcal{O}}(n)$ is the maximum of $PC_{\mathcal{O}}(G) $ over all inputs $G$ parameterized by size $n$.

    We say $\mathcal{O}$ has \textbf{expected runtime} $t(n)$ if its runtime on any input graph and any query is at most $t(n)$ in expectation (over the random choice of $\vec{r}$).
\end{definition}

%Let us highlight the differences between the Partition Oracle model and the LCA model. {\color{purple} Add here.}

We will show how our result fits in the framework of average-case LCAs, which are defined below. The average-case LCA model was introduced to address the natural question of whether more efficient local computation algorithms can be designed under the assumption that the input is drawn from some distribution. In this setting, the goal is for the LCA to succeed with high probability over a randomly chosen input from the distribution, instead of considering the LCA's performance only over worst-case graphs.

%Recently \cite{beyond_worst_case_LCA} defined a new version of the LCA model that fits our case of the \textit{partition oracle}.

\begin{definition}[\cite{beyond_worst_case_LCA}]\label{def:avg-case-lca}
    An oracle $\mathcal{O}$ for a problem $\Pi$ is a \textbf{average-case local computation algorithm} for a distribution $\mathcal{G}$ over objects if, with probability $(1 - \frac{1}{n})$ over $G\leftarrow\mathcal{G}$, $\mathcal{O}$ is an LCA for $\Pi$ on $G$.

    We say that average-case LCA $\mathcal{O}$ has \textbf{worst-case probe complexity} $PC(n)$ if the maximum $PC_{\mathcal{O}}(G) $ over $G \leftarrow \mathcal{G}$ is $PC(n)$, where $\mathcal{G}$ is a distribution over inputs $G$ parameterized by size $n$.
\end{definition}

We note that, while the coloring oracle needed to return a consistent coloring on any input $k$-uniform bipartite hypergraph $\mathcal{H}$, the average-case LCA can fail to return a consistent coloring on a $1/n$ fraction of the input. This allows us to make a straightforward modification (Step 7 in Algorithm 3) to obtain a slightly simpler algorithm for the average-case LCA, and to achieve better guarantees about the probe complexity (worst-case instead of average-case).

\begin{algorithm}[H]\label{alg:avg-lca}
\caption{Average-case LCA for Bipartite $k$-Graphs}
    \textbf{Input:} Vertex $u$ in some bipartite $k$-graph $\mathcal{H}$, oracle access to $E(\mathcal{H})$, and random tape $\Vec{r}$ of words. \\
    \textbf{Output:} Color assignment to $u$.\\

    1.\quad\textbf {Procedure:} \textsc{Coloring-LCA($\mathcal{H}$,$u$)}\\
    2.\quad Use $\Vec{r}$ to uniformly sample $(2\ell+k-1)$-tuples $(x_1,\ldots,x_{2\ell},y_1,\ldots,y_{k-1})$ of vertices until $(i)$ and $(ii)$ hold:\\
    $~~~~~~~$ $(i)$ Vertices $x_1,\ldots,x_{2\ell}$ span a copy of $K_{\ell,\ell}$. Set $A,B$ to be the independent sets of this copy.\\
    $~~~~~~~$ $(ii)$ Vertices $y_1,\ldots,y_{k-1}$ satisfy one of the following:\\
    $~~~~~~~~$ $(ii.1):$ $(1,y_1,\dots,y_{k-1})\in E(\mathcal{H})$ and $N(y_i,A)\neq \emptyset$ for every $i\in [k-1]$. Set $c_A=0,c_B=1$\\
    $~~~~~~~~$ $(ii.2):$ $(1,y_1,\dots,y_{k-1})\in E(\mathcal{H})$ and $N(y_i,B)\neq \emptyset$ for every $i\in [k-1]$. Set $c_A=1,c_B=0$\\
    $~~~~~~~$ \textbf{If} all $(2\ell+k-1)$-tuples were inspected, and none of them satisfies $(i),(ii)$, goto Step 7.\\
    3.\quad Use $\Vec{r}$ to uniformly sample $(k-1)$-tuples of vertices $v_1,\dots,v_{k-1}$:\\
    4.\qquad \textbf{If} $(u,v_1,\dots,v_{k-1})\in E(\mathcal{H})$ and $N(v_i,A)\neq \emptyset$ for every $i\in [k-1]$ then return $c_A$.\\
    5.\qquad \textbf{If} $(u,v_1,\dots,v_{k-1})\in E(\mathcal{H})$ and $N(v_i,B)\neq \emptyset$ for every $i\in [k-1]$ then return $c_B$.\\
    6.\qquad \textbf{If} all $(k-1)$-tuples have been inspected, and none of them satisfies $4-5$, then goto Step 7.\\
    7.\quad Return $0$ (arbitrarily color).
\end{algorithm}

\begin{proof}[Idea of Proof of Theorem \ref{LCA partition oracle}]
    In the average-case LCA model, we need that for hypergraphs $G$ sampled from the uniform distribution over $k$-uniform bipartite hypergraphs, with probability at least $1 - \frac{1}{n}$ for $G$ over this distribution, Algorithm 3 is an LCA for 2-coloring on $G$. When $G$ is a \textit{good} hypergraph, the algorithm succeeds at providing local access to a $2$-coloring on $G$. Since the probability that a random $k$-uniform bipartite hypergraph is \textit{not} good is exponentially small, the algorithm satisfies this error guarantee. Indeed, for $k = 2$, the proofs in Section \ref{section:good-hypergraph-proofs} can be used to show that a random bipartite $2$-graph is \textit{not} good with probability less than $1/n$, so the error guarantee is also satisfied in this case.

    Therefore, since Algorithm 2 satisfied the definition of a \emph{coloring oracle}, Algorithm 3 (which only differs by simplifying Step 7) is an average-case LCA. The probe complexity and runtime are also $O(1)$ for Algorithm 3 since $O(1)$ words are sampled, and for each sample only operations of cost $O(1)$ are used (as in the proof for Algorithm 2).

    Finally, we remark that as in the coloring oracle, the LCA uses independent (i.e. not shared) random coins for each query. Since the LCA model does provision shared randomness, we simply have the LCA divide the shared random tape into blocks $r_1,\ldots,r_n$ of sufficiently long length, and use the section of the random tape $r_u$ on query $u$.
\end{proof}

\end{document}